\def\fullversionflag{1}     % full version
\newcommand{\ignore}[1]{}
\newcommand{\deq}{\mathrel{\mathop:}=}
\newcommand{\paragraphb}[1]{\smallskip\noindent{\bf #1.}}
\newcommand{\prompt}{\mathsf{prompt}}
\newcommand{\terminate}{\mathsf{ter}}
\newcommand{\PRFEval}{\mathsf{PRF}.\mathsf{Eval}}
\newcommand{\PRFGen}{\mathsf{PRF}.\mathsf{Gen}}
\newcommand{\PRF}{\mathsf{PRF}}
\newcommand{\KG}{\mathsf{KeyGen}}
\newcommand{\Enc}{\mathsf{Enc}}
\newcommand{\Dec}{\mathsf{Dec}}
\newcommand{\Perturb}{\mathsf{Perturb}}
\title{Robust Steganography from Large Language Models}
    \author{Neil Perry}
    \author{Sanket Gupte}
    \author{Nishant Pitta}
    \author{Lior Rotem}
    \affil{Stanford University\\ {\tt\small \{naperry,sanketg,lrotem\}@cs.stanford.edu}}
    \institute{}
    \date{}
\author{}
\institute{}
\begin{document}

\maketitle
\iftoggle{fullversion}{}{\vspace{-1cm}}

\begin{abstract}

Recent steganographic schemes, starting with Meteor (CCS'21), rely on leveraging large language models (LLMs) to resolve a historically-challenging task of disguising covert communication as ``innocent-looking'' natural-language communication. 
However, existing methods are vulnerable to ``re-randomization attacks,'' where slight changes to the communicated text, that might go unnoticed, completely destroy any hidden message.  
This is also a vulnerability in more traditional encryption-based stegosystems, 
where adversaries can modify the randomness of an encryption scheme to destroy the hidden message while preserving an acceptable covertext to ordinary users. 
In this work, we study the problem of {\bf robust steganography}. 
We introduce formal definitions of weak and strong robust LLM-based steganography, corresponding to two threat models in which natural language serves as a covertext channel resistant to realistic re-randomization attacks. We then propose two constructions satisfying these notions. We design and implement our steganographic schemes that embed arbitrary secret messages into natural language text generated by LLMs, ensuring recoverability even under adversarial paraphrasing and rewording attacks. To support further research and real-world deployment, we release our implementation and datasets for public use.
\end{abstract}

\section{Introduction}
\label{introduction}

In authoritarian regimes, even the mere act of encryption can be a death sentence. In such settings, \emph{steganography}---the art of embedding hidden messages within innocent-looking communications~\cite{petitcolas2002information,johnson1998exploring}---provides a critical lifeline. While cryptography secures content against unauthorized readers, steganography goes further by concealing the very existence of the communication. In repressive environments, this distinction is critical: even the act of encrypting a message can raise suspicion or trigger punitive measures.

To ground this discussion, imagine a world where a dictatorship mandates that all communication be conducted using state-issued public-key encryption, and the regime retains a copy of every citizen's secret key. In such a setting, encryption offers no privacy—messages are effectively plaintext to the state. Yet remarkably, Horel~et~al.~\cite{steganography_horel} showed that even under such extreme surveillance, hidden communication is still possible. By leveraging the randomness inherent in encryption, one can encode secret bits via rejection sampling, choosing ciphertexts whose structure encodes information while remaining indistinguishable from ordinary ciphertexts.

However, such techniques are fragile. A simple countermeasure available to the regime is ciphertext re-randomization: decrypt the message and re-encrypt the same plaintext with fresh randomness before forwarding it. This effectively erases the hidden message, without altering the visible content of the communication. Similar fragility appears in modern steganographic schemes based on large language models (LLMs), where information is hidden in sampled tokens conditioned on a secret seed \cite{meteor}. These methods can also be disrupted by semantically neutral edits—such as rewording, paraphrasing, or reformatting—that preserve the apparent meaning of the message but destroy the encoding.

This brittleness motivates a new desideratum for steganographic systems: \emph{robustness}. Intuitively, a robust stegosystem should withstand realistic modifications to the covertext, preserving the hidden message so long as the overt meaning remains unchanged. This models the behavior of a sophisticated adversary—e.g., a censorship regime—that seeks to strip hidden content while maintaining the flow of normal communication.

In this paper, we initiate the study of robust steganography. We formalize adversarial models in which covertexts are subject to modification—ranging from localized perturbations to full semantic paraphrasing—and define both weak and strong notions of robustness corresponding to these capabilities.

We then present two new steganographic constructions that aim to meet these definitions. The first perturbs the sampling distribution of an LLM to encode hidden bits in a key-dependent but undetectable manner, and is weakly robust to local edits. The second uses semantic embeddings and locality-sensitive hashing to embed information in the semantic space of the message, enabling strong robustness to paraphrasing and meaning-preserving edits.

Both constructions are implemented and evaluated. We measure their resistance to realistic tampering, assess the overhead required to achieve robustness, and provide practical guidance for deployment. Our watermarking-based system survives most classes of attacks—even when up to 20\% of the covertext is modified—while maintaining a recovery rate of approximately 90\% or higher. The embedding-based scheme resists attacks far more effectively than the watermarking-based schemes-- surviving up to 50\% covertext modifications while requiring 100x fewer tokens for the covertext per bit hidden. Additionally, the embedding system on average costs $<$ 1¢ per hidden byte of data. Our code and experiments are available at \url{https://github.com/NeilAPerry/robust_steganography}.

\section{Background and Related Work}

Steganography—the practice of hiding secret information within other innocent-looking messages—has long been studied as a means of secure communication in adversarial environments. In this paper, we consider a particularly extreme case: a world in which a dictatorship assigns each citizen a public-secret key pair, mandates the use of these keys for all communication, and retains a copy of every citizen's secret key. In such a setting, the dictator can read all communications as if they were transmitted in plaintext.

Horel~et~al.~\cite{steganography_horel} recently demonstrated that even under such draconian conditions, private, undetectable communication remains possible. Their construction operates by selecting the encryption randomness in a way that depends on the secret message. More concretely, suppose Alice wishes to send a secret bit $b \in \{0,1\}$ to Bob, with whom she shares a secret key $k$ unknown to the dictator. She repeatedly encrypts an innocent-looking message $m$ with fresh randomness until she obtains a ciphertext $ct$ such that $H(k,ct) = b$, where $H$ is a suitable hash function. This method is effective provided it meets two key criteria:

\begin{itemize}
	\item The ciphertext should remain indistinguishable from a normal encryption of $m$ to the dictatorship, even though it possesses Bob's secret key (but not $k$); and
	\item For any given $m$ and $b$, a ciphertext $ct$ satisfying $H(k,ct) = b$ should be found after only a small number of trials.
\end{itemize}

Horel et al. showed that both conditions can be met under standard cryptographic assumptions—specifically, that the encryption scheme is semantically secure to any party lacking the recipient’s secret key. They further demonstrated that similar ideas can be used to undetectably execute key exchange protocols, even in this hostile setting.

The general idea of using rejection sampling over a high-entropy stochastic channel for steganography dates back to early foundational work~\cite{cachin1998information}, and variations of this idea have appeared in nearly all subsequent rigorous stegosystems~\cite{hopper2002provably,von2004public,backes2005public}.

However, such schemes are inherently fragile. A straightforward countermeasure available to the dictatorship is ciphertext re-randomization: the regime decrypts the ciphertext $ct$, re-encrypts the same plaintext $m$ with fresh randomness, and forwards the new ciphertext $\hat{ct}$ to Bob. This re-randomization destroys any steganographically embedded content, while preserving the plaintext, and therefore does not interfere with honest communication. In light of this, we say that the scheme of Horel et al.\ lacks \emph{robustness}, since ``reasonable modifications’’ to the covertext—in this case, the ciphertext—destroy the hidden message. We formally define robustness and reasonable modifications in Section~\ref{adversary_capabilities}.

The situation worsens if the regime bans randomized encryption altogether, mandating the use of deterministic encryption~\cite{bellare2007deterministic,rogaway2006provable,brakerski2011better}, or even prohibits encryption entirely. In such settings, any steganographic embedding must occur directly in the plaintext channel. This possibility has prompted extensive research into steganographic schemes that embed hidden messages within natural language text~\cite{wayner1992mimic,chapman1997hiding,bennett2004linguistic,topkara2007information,shirali2008text,yang2018rnn,ziegler2019neural,zhang2021provably}, or other natural data such as images~\cite{image_steg,westfeld2001f5,cheddad2010digital,baluja2017hiding,zhu2018hidden,lu2021large}.

The idea of hiding messages within common language is not new—humans have been doing so for thousands of years, with historical accounts dating back to Ancient Greece~\cite{stego_history}. However, nearly all such systems, including the modern ones cited above, are heuristic in nature and lack rigorous security guarantees. Their fragility arises from two central challenges recognized in the formal study of steganography:

\begin{itemize}
	\item A well-defined probability distribution over the vocabulary of the covertext language is required, conditioned on prior history; and
	\item Each block of text must contain sufficient entropy to allow secure hiding of information.
\end{itemize}

Natural languages such as English pose problems for both requirements. The distribution over valid continuations of a sentence is hard to formalize, highly context-dependent, and typically exhibits low entropy in predictable settings. For instance, the sentence "At the museum I saw a Tyrannosaurus" is overwhelmingly likely to be completed with "Rex," offering little room for information embedding due to the determinism of the next token.

Meteor~\cite{meteor} addresses these challenges by using modern large language models (LLMs) to both produce and analyze text. In Meteor, both sender and receiver rely on the same LLM to model token distributions and determine where information can be safely embedded. Specifically, the sender uses a ciphertext as the seed to condition the LLM and begins generating text from that state. At each token step, the model's predicted distribution is used to assess whether sufficient entropy is present to embed bits. If so, the sender samples from a subset of the distribution in a way that encodes the secret message. The receiver, using the same LLM, can detect high-entropy positions and recover the embedded bits.

Meteor’s approach offers several advantages. It ensures that steganographic messages are only embedded when statistically safe, that the covertext closely resembles human writing, and that both sender and receiver can independently determine where embedding is feasible. Nonetheless, Meteor inherits the fundamental vulnerability of all previous systems discussed thus far: it is not robust. Minor changes to the generated text—such as replacing words with synonyms, adding or removing words, or rephrasing sentences—are enough to alter the decoding trajectory of the LLM and make the original seed unrecoverable. These edits are commonplace in communication and do not affect semantic meaning, yet they completely destroy the hidden information. Thus, like the earlier rejection sampling schemes, Meteor fails to support robust steganography.

Similar vulnerabilities appear in other systems that embed covert data into low-level features of ciphertexts or generated text. Wink~\cite{wink} is a system that embeds hidden messages within the random values of encrypted communications, achieving deniability by ensuring that ciphertexts can plausibly be interpreted in multiple ways. However, like Horel et al.'s construction, Wink is fragile in the face of re-randomization attacks: an adversary can simply decrypt and re-encrypt a message, thereby removing any embedded content without altering the plaintext. Since these re-randomized ciphertexts remain valid and indistinguishable from ordinary ones, this attack is undetectable and does not interfere with honest communication.

Other approaches in linguistic steganography, such as SegFree~\cite{segfree}, attempt to embed covert messages directly within natural language output by using generative methods. 
Additionally, recent work such as Discop \cite{discop} has explored provably secure steganographic methods that ensure covertext distributions remain indistinguishable from specified original distributions, highlighting the importance of statistical matching in steganography.

The broader literature on watermarking LLM outputs provides further evidence of how fragile these systems can be. Recently established impossibility results for strong watermarking in generative models show that certain watermarking schemes can be removed without significant degradation to the generated content \cite{watermarks_in_sand}. However, these results do not apply to our system, as our watermarking-based approach has stricter requirements than the general case.

Christ~and~Gunn~\cite{christ2024pseudorandom}, done concurrently to this work, use watermarks to embed hidden bits in text via pseudorandom error-correcting codes, a technique similar to our multi-watermark embedding method. Zamir~\cite{zamir2024excuse}, done concurrently to this work, further extends this and presents a steganographic method embedding hidden messages in LLM outputs while ensuring undetectability, which is similar to our multi-watermark embedding approach. Crucially, it is not robust to paraphrasing text. Our embeddings based approach is robust to paraphrasing attacks.

Taken together, the prior work discussed here paints a clear picture: the state of the art in steganography—across ciphertexts, language, and watermarking—remains vulnerable to simple, realistic tampering. Existing systems emphasize undetectability and decoding under ideal conditions but offer little protection against covertext modifications. Robustness—the ability to recover hidden messages after adversarial edits that preserve semantics—has received scant attention and is largely absent from formal definitions and constructions. In this work, we initiate a systematic study of robustness in steganographic systems and demonstrate that meaningful progress is possible in both theory and practice.

\FloatBarrier
\section{Preliminaries}

\subsection{Pseudorandom Functions}
We use the following standard notion of a pseudorandom function. Let $\PRF = (\PRFGen, \PRFEval)$ be a function family over domain $\{ \mathcal{X}_\lambda\}_{\lambda \in \mathbb{N}}$ with range $\{ \mathcal{Y}_\lambda\}_{\lambda \in \mathbb{N}}$ and key space $\{ \mathcal{K}_\lambda\}_{\lambda \in \mathbb{N}}$, such that:
\begin{itemize}
	\item $\PRFGen$ is a probabilistic polynomial-time algorithm, which takes as input the security parameter $\lambda \in \mathbb{N}$ and outputs a key $k \in \mathcal{K}_\lambda$.
	\item $\PRFEval$ is a deterministic polynomial-time algorithm, which takes as input a key $k\in\mathcal{K}_\lambda$ and a domain element $x \in \mathcal{X}_\lambda$ and outputs a value $y \in \mathcal{Y}_\lambda$.
\end{itemize}%
For ease of notation, for a key $k\in \mathcal{K}_\lambda$, we denote by $\PRF_K(\cdot)$ the function $\PRFEval(K,\cdot)$. We also assume without loss of generality that for every $\lambda \in \mathbb{N}$, it holds that $\mathcal{K}_\lambda = \{0,1\}^\lambda$ and that $\PRFGen(1^\lambda)$ simply samples $k$ from $\{0,1\}^\lambda$ uniformly at random. Using these conventions, the following definition captures the standard notion of a pseudorandom function family.

\begin{definition}\label{Def:PRF}
	A function family $\PRF = (\PRFGen, \PRFEval)$ is {\em pseudorandom} if for every probabilistic polynomial-time algorithm $D$, there exists a negligible function $\nu(\cdot)$ such that  
	\[ \advantage{prf}{D, \PRF} \deq \left|   \Pr_{k \gets \{0,1\}^\lambda} \left[  {\sf D}(1^\lambda)^{\PRF_k(\cdot)} = 1  \right]  -  	\Pr_{f \gets \mathcal{F}_\lambda} \left[  {\sf D}(1^\lambda)^{f(\cdot)} = 1  \right] \right| \leq \nu(\lambda),  \]%
	for all sufficiently large $\lambda \in \mathbb{N}$, where $\mathcal{F}_\lambda$ is the set of all functions mapping $\mathcal{X}_\lambda$ into $\mathcal{Y}_\lambda$.
\end{definition}

\subsection{Language Models}

A {\bf language model} is a pair $(\mathcal{T},M)$, where $\mathcal{T} = \{ t_1,\ldots, t_n \}$ is a finite set of symbols called {\em tokens}, and $M$ is a probabilistic polynomial-time algorithm. The set $\mathcal{T}$ includes the special termination symbol $\terminate$. $M$ takes as input a {\em prompt} $\prompt \in \mathcal{T}^{\ast}$ and a (possibly empty) sequence $\Vec{w} = (w_1,\ldots, w_{i-1}) \in \mathcal{T}^{i-1}$ of tokens.\footnote{For ease of presentation, we assume that $\prompt \in \mathcal{T}^{\ast}$. However, everything in this paper readily extends to the case where the prompt may contain symbols not in $\mathcal{T}$.} On this input, $M$ outputs a probability distribution $P_i \gets M(\prompt, \Vec{w})$ over $\mathcal{T}$. That is, for every $t_j \in \mathcal{T}$, $P_i(t_j) \in [0,1]$ and $\sum_{j \in [n]} P_i(t_j) = 1$.

For a language model $(\mathcal{T},M)$ and a prompt $\prompt \in \mathcal{T}^{\ast}$, we may consider the following probability distribution over $\mathcal{T}^\ast$: 
\begin{enumerate}
    \item Compute $P_1 \gets M(\prompt)$, and sample $w_1 \sample P_1$. If $w_1 = \terminate$, then output $w_1$ and terminate. 
    \item Set $i \gets 1$ and $\Vec{w} \gets (w_1)$. 
    \item Until $w_i = \terminate$ do:
    \begin{enumerate}
        \item Increment $i \gets i+1$.
        \item Compute $P_i \gets M(\prompt, \Vec{w})$, and sample $w_i \sample P_i$. 
        \item Append $w_i$ to $\Vec{w}$.
    \end{enumerate}
    \item Output $\Vec{w}$.
\end{enumerate}
We denote the process of sampling from this distribution by $\Vec{w} \sample M^\ast(\prompt)$, and call $\Vec{w}$ a {\em response} of $M$ on $\prompt$.

\subsection{Locally Sensitive Hashing}

{\bf Locally Sensitive Hashing (LSH)} is a technique used to map high-dimensional data into a lower-dimensional space while preserving proximity relationships between data points. Formally, a family of hash functions $\mathcal{H} = \{h : \mathbb{R}^d \to \{0,1\}^n\}$ is called $(r_1, r_2, p_1, p_2)$-sensitive for a similarity function $S : \mathbb{R}^d \times \mathbb{R}^d \to [0, 1]$ if, for any $x, y \in \mathbb{R}^d$:
\begin{itemize}
    \item If $S(x, y) \geq r_1$, then $\Pr[h(x) = h(y)] \geq p_1$,
    \item If $S(x, y) \leq r_2$, then $\Pr[h(x) = h(y)] \leq p_2$,
\end{itemize}
where $r_1 > r_2$ and $p_1 > p_2$.

The output of an LSH function is an $n$-bit binary string, where each bit is determined based on the properties of the hash function and the similarity of the input data points. LSH is commonly used in applications requiring approximate nearest neighbor search, as it efficiently maps similar inputs to similar hash outputs with high probability.

\subsection{Embeddings}

{\bf Embeddings} are dense vector representations of objects, such as words, sentences, or paragraphs, in a high-dimensional space. An embedding function $E : \mathcal{T}^\ast \to \mathbb{R}^d$ maps a sequence of tokens $t_1, t_2, \ldots, t_n$ to a vector in $\mathbb{R}^d$, where $d$ is the dimensionality of the embedding space.

Embeddings encode semantic and contextual information about the objects they represent. For example, embeddings generated by language models often reflect both lexical and syntactic relationships, enabling comparisons between similar or related objects. Formally, given two sequences $x, y \in \mathcal{T}^\ast$, their similarity can often be evaluated using geometric measures such as cosine similarity or Euclidean distance between their embeddings.

The embedding space is characterized by its high dimensionality and structure, which enables it to represent a wide range of linguistic and semantic features compactly and effectively.
\FloatBarrier
\section{Threat Model}

Robust steganographic schemes must withstand adversaries capable of intercepting and modifying communications. In this section, we define the adversary’s goals, capabilities, and constraints. To reflect different levels of adversarial sophistication, we introduce two threat models—a {\bf weak model} and a {\bf strong model}—that motivate our subsequent definitions of robustness.

\subsection{Adversary Motivation and Justification}
The adversary seeks to maintain control over information flow by targeting hidden communication. Their primary objectives are:
\begin{itemize}
    \item {\bf Detection}: Identifying whether a message contains hidden information, enabling punitive or censorial actions.
    \item {\bf Disruption}: Modifying the message in a way that prevents successful decoding of hidden information by the intended recipient while leaving the meaning of the message intact.
\end{itemize}

However, the adversary must balance their actions against the potential for societal backlash. Excessive tampering that fundamentally disrupts the intent or meaning of messages can lead to public discontent or retaliation. Complete censorship of communication has been found to increase protests \cite{rydzak2020internet} and be prohibitively expensive \cite{robinson2013collateral}. Governments have been found to deploy selective censorship to achieve particular goals such as stability instead of censoring anything critical of the government \cite{king2013censorship}. For example:
\begin{itemize}
    \item If citizens notice that their everyday communications are being significantly altered or misrepresented, trust in the communication system may collapse.
    \item Such disruptions could incite rebellion or other forms of resistance, undermining the adversary’s goals.
\end{itemize}

Thus, the adversary seeks to strike a balance: they aim to detect or destroy hidden messages without making alterations so severe that they provoke widespread dissent. This motivates the formalization of tampering functions and defines the scope of robustness required for steganographic systems.

\subsection{Adversary Capabilities}
\label{adversary_capabilities}
We assume a man-in-the-middle adversary with the following capabilities:
\begin{itemize}
    \item {\bf Intercept and Analyze}: The adversary can intercept and analyze all communication between the sender and receiver.
    \item {\bf Modify Stegotext}: The adversary can modify the stegotext, either by introducing small structural changes or by generating semantically equivalent paraphrases.
    \item {\bf Linguistic Coherence}: The adversary has the ability to produce plausible modifications using automated or manual natural language processing techniques.
\end{itemize}

Despite these capabilities, the adversary is constrained by the need to avoid provoking widespread dissent among the affected population. This leads to two adversarial models: the {\bf weak threat model}, which assumes limited tampering capabilities, and the {\bf strong threat model}, which assumes more sophisticated attacks.

\subsection{Weak Threat Model: Locally Consistent Tampering}
\label{weak_threat_model}
This model corresponds to \emph{weak robustness} or \emph{local robustness}, which captures adversaries that make small, localized edits—such as reordering n-grams, substituting synonyms, or inserting short phrases. These changes typically preserve surface fluency and often preserve meaning, but this is not always the case. For example, inserting the word ``not'' may leave most of the structure unchanged while inverting the semantic intent of the sentence.

In the weak threat model, the adversary’s modifications are limited to small structural changes to the message. These changes are modeled by a family of tampering functions $\mathcal{F}$, specifically $(k, \epsilon)$-locally consistent functions.

\paragraphb{Definition of Locally Consistent Tampering} A tampering function $f \in \mathcal{F}_{k, \epsilon}$ ensures that for every input string $x \in \Sigma^\ast$, the tampered string $f(x)$ agrees with $x$ on at least an $\epsilon$-fraction of its $k$-length continuous substrings. Formally:
\[
\mathcal{F}_{k,\epsilon} = \left\{ f : \forall x\in \Sigma^\ast, \sum_{i = 1}^{|x|-k+1} \xi\left(x[i:i+k-1], f(x)\right) \geq \epsilon \cdot (|x| - k + 1) \right\},
\]
where $\xi(x[i:i+k-1], f(x)) = 1$ if $x[i:i+k-1]$ is a continuous substring of $f(x)$, and 0 otherwise.

Examples of locally consistent tampering include:
\begin{itemize}
    \item Reordering or shuffling $n$-grams.
    \item Deleting or repeating small parts of the message.
    \item Making minor edits that leave the message’s intent intact.
\end{itemize}

The weak model reflects adversarial actions designed to destroy hidden communication while avoiding overt disruption that might provoke dissatisfaction.

\subsection{Strong Threat Model: Semantically Preserving Paraphrasing}
\label{strong_threat_model}
This model corresponds to \emph{strong robustness} or \emph{semantic robustness}, which captures adversaries that make broader, meaning-preserving rewrites—such as paraphrasing, stylistic reformulations, or structural transformations. These changes may significantly reshape how the sentence is written while leaving its meaning intact.

The strong threat model assumes a more sophisticated adversary capable of generating paraphrases of the stegotext. These paraphrases:
\begin{itemize}
    \item {\bf Preserve Semantic Meaning}: The adversary rewrites the text to retain the original meaning, ensuring it remains plausible to the recipient.
    \item {\bf Maintain Embedding Proximity}: The paraphrase must remain close to the original text in the embedding space, reflecting semantic equivalence.
\end{itemize}

\paragraphb{Definition of Embedding Proximity} Let $E(x)$ denote the embedding of a text $x \in \Sigma^\ast$.

We define a distance function
\[
\mathrm{distance}: \mathbb{R}^d \times \mathbb{R}^d \to \mathbb{R}_{\ge 0},
\]
mapping pairs of embeddings in $\mathbb{R}^d$ to nonnegative real values. Any well-defined distance function (e.g., Euclidean distance or cosine similarity) may be used, and will correspond to a particular Locality Sensitive Hash (LSH) function.

A paraphrase $x'$ satisfies:
\[
\text{distance}(E(x), E(x')) \leq \delta,
\]
where $\delta > 0$ is a small threshold representing allowable semantic deviation.

Examples of adversarial actions in this model include:
\begin{itemize}
    \item Rewriting sentences using synonyms or alternative phrasing.
    \item Using NLP tools to generate semantically equivalent variations of the text.
    \item Making changes to sentence structure while retaining overall meaning.
\end{itemize}

This strong model reflects adversaries capable of leveraging advanced linguistic and computational techniques to detect or disrupt hidden messages.

\subsection{Summary and Comparison of Threat Models}
\begin{itemize}
    \item {\bf Weak Model (Local Robustness)}: The adversary modifies stegotext using $(k, \epsilon)$-locally consistent tampering functions, introducing small structural changes to the text.
    \item {\bf Strong Model (Semantic Robustness)}: The adversary generates semantically equivalent paraphrases constrained by proximity in the embedding space, representing a more sophisticated attack.
\end{itemize}

These threat models highlight the balance adversaries seek to strike: they aim to disrupt hidden communication without sufficiently disrupting the message’s meaning to provoke retaliation. The weak model aligns with scenarios involving simple structural tampering, while the strong model reflects adversaries equipped with advanced paraphrasing capabilities.
Although these definitions are formally incomparable—neither strictly implies the other—they differ in how well they align with the motivations of real-world adversaries. For instance, inserting the word ``not'' may satisfy the weak definition by preserving local structure, but it changes the meaning of the message and would likely be avoided by an adversary under the motivating conditions for this work. In this sense, the strong (semantic) definition more accurately captures the kinds of transformations we aim to defend against.
We retain the terms \textit{weak} and \textit{strong} throughout the paper to match our formal definitions and maintain continuity with prior work. At the same time, we use \textit{local} and \textit{semantic} as alternative names to clarify the nature of the edits each model targets. Notably, the weak model connects naturally to existing paradigms like watermarking, which can be seen as a special case of steganography and has historically focused on robustness to small, localized changes. Our strong definition generalizes this by targeting meaning-preserving transformations, motivating a new class of robust stegosystems.

\FloatBarrier
\section{Defining Robust Steganography}

In this section, we formally define the notion of a robust steganographic scheme. While standard security definitions ensure that hidden communication is indistinguishable from ordinary covertext, they fail to address scenarios where adversaries actively modify stegotext messages. As described in the threat model, adversaries aim to detect or disrupt hidden communication without provoking societal backlash, constraining their tampering to realistic limits. 

To address these practical considerations, we introduce robustness definitions that guarantee successful decoding under two levels of adversarial tampering:
\begin{itemize}
    \item {\bf Weak robustness}, which accounts for small structural changes modeled by locally consistent tampering functions.
    \item {\bf Strong robustness}, which accounts for semantically equivalent paraphrasing constrained by proximity in the embedding space.
\end{itemize}
These definitions provide formal guarantees against adversarial modifications that align with the weak and strong threat models. We first recall the standard definition of a symmetric-key steganographic scheme, and then define robustness under both the weak and strong threat models introduced earlier.

\subsection{Symmetric-key Steganography}

In this work, we focus on the symmetric-key setting. In this setting, the sender and receiver share a secret key $k$, and communicate over a {\em covertext channel} $Ch$. This channel is characterized by an alphabet $\Sigma$ and a mapping $\mathcal{D}$ that maps a history $\vec{h} = (h_1, \ldots, h_\ell)$ to a probability distribution over $\Sigma^\ast$.

Following the definitions in \cite{hopper2002provably}, a symmetric-key steganographic scheme is parameterized by a {\em covertext channel} with alphabet $\mathcal{T}$ and distribution $\mathcal{D}$, and consists of a tuple $\Pi = (\KG, \Enc, \Dec)$, where:
\begin{itemize}
    \item $\KG(\secparam) \to k$: A randomized key generation algorithm that takes the security parameter $\lambda$ and outputs a symmetric key $k$.
    \item $\Enc(k, m, \vec{h}) \to st$: A randomized encoding algorithm that takes a key $k$, a message $m$, and a history $\Vec{h} = (h_1, \ldots, h_\ell)$ of previously sent stegotext messages, and outputs a stegotext message $st$.
    \item $\Dec(k, \vec{h}, st) \to m/\bot$: A deterministic decoding algorithm that takes a key $k$, a history $\vec{h}$, and a stegotext $st$, and outputs a message $m$ or a rejection symbol $\bot$.
\end{itemize}

\paragraphb{Correctness} 
    A steganographic scheme is correct, if decoding an honestly generated stegotext yields the hidden message with overwhelming probability.
\begin{definition}\label{def:Correctness}
    A steganographic scheme $\Pi = (\KG, \Enc, \Dec)$ is said to be correct if for $\lambda\in \NN$, any message $m$, and any history $\Vec{h} \in \Sigma^\ast$ it holds that the following function is negligible in $\lambda$:
\[  \Pr \left[ m' \neq m  \ : \ \begin{array}{c}
     k \sample \KG(\secparam) \\
     st \sample  \Enc(k,m,\vec{h})\\
     m' \gets \Dec(k, \Vec{h}, st)
\end{array}\right]. \]%
\end{definition}

\paragraphb{Security} Informally, a steganographic scheme is said to be secure if no efficient adversary can detect whether communication over the channel $Ch$ bears a hidden message or not. This is formalized by granting the adversary oracle access to either an encoding oracle $\Enc(k, \cdot, \cdot)$ or to a channel-sampling oracle $\mathcal{O}_{\mathcal{D}}(\cdot, \cdot)$. The oracle $\mathcal{O}$ takes in a message $m$ and a history $\Vec{h}$. It ignores $m$ and replies with a response $\Vec{a} \in \Sigma^\ast$ sampled according to the distribution $\mathcal{D}(\Vec{h})$. An efficient adversary should not be able to tell whether it is given access to  $\Enc(k, \cdot, \cdot)$ or to $\mathcal{O}_{\mathcal{D}}(\cdot, \cdot)$.

\begin{definition}\label{def:CHASecurity}
     A steganographic scheme $\Pi = (\KG, \Enc, \Dec)$ is said to be {\em secure against chosen hidden-text attacks} if for every probabilistic polynomial-time adversary $A$, there exists a negligible function $\nu(\cdot)$ such that  
	\[ \advantage{cha}{D, \Pi} \deq \left|   \Pr \left[  {A}(1^\lambda)^{\Enc(k, \cdot, \cdot)} = 1  \right]  -  	\Pr \left[  {A}(1^\lambda)^{\mathcal{O}_{\mathcal{D}}(\cdot, \cdot)} = 1  \right] \right| \leq \nu(\lambda),  \]%
	for all sufficiently large $\lambda \in \mathbb{N}$, where  the probabilities are taken over $k \sample \KG(\secparam)$, the randomness of $A$ and the randomness of the oracles.
\end{definition}

\subsection{Weak Robustness}

Under the weak threat model, robustness ensures that the hidden message can be successfully recovered even if the stegotext is altered by a function \(f \in \mathcal{F}_{k, \epsilon}\), where \(\mathcal{F}_{k, \epsilon}\) denotes the family of locally consistent tampering functions.

\begin{figure}[h]
    \begin{pcvstack}[center,boxed]
        \procedure[linenumbering]{$\mathbf{WeakRobust}_{\adv, \Pi, \mathcal{F}_{k,\epsilon}}(\secpar)$}{
            (\Vec{h}, m, \state) \sample A(\secparam) \\
            k \sample \KG(\secparam) \\
            st \sample \Enc(k, m, \Vec{h}) \\
            (st', f) \sample A(\state, st) \\
            \pcif f \not\in \mathcal{F}_{k, \epsilon} \pcthen \pcreturn 0 \\
            m' \gets \Dec(k, \Vec{h}, st') \\
            \pcif m \neq m' \pcthen \pcreturn 1 \ \pcelse \pcreturn 0
        }
    \end{pcvstack}
    \caption{The weak robustness security game for a steganographic scheme $\Pi$, an adversary $A$, and a tampering function family $\mathcal{F}_{k, \epsilon}$.}
    \label{fig:weakrobust}
\end{figure}

\begin{definition}[Weak Robustness]\label{def:WeakRobustness}
    A steganographic scheme $\Pi = (\KG, \Enc, \Dec)$ is said to be $\mathcal{F}_{k, \epsilon}${\bf -robust} if for every probabilistic polynomial-time adversary $A$, there exists a negligible function $\nu(\cdot)$ such that:
\[
\advantage{rbst}{A, \Pi, \mathcal{F}_{k, \epsilon}} \deq \Pr \left[ {\bf WeakRobust}_{\adv,\Pi, \mathcal{F}_{k,\epsilon}}(\secpar) = 1 \right] \leq \nu(\lambda),
\]
for all sufficiently large $\lambda \in \mathbb{N}$. Figure~\ref{fig:weakrobust} contains the definition of the weak robustness security game.
\end{definition}

\subsection{Strong Robustness}

Under the strong threat model, robustness ensures that the hidden message can be successfully recovered even if the stegotext is paraphrased, as long as the paraphrased text remains semantically similar to the original. This similarity is measured by the proximity of their embeddings in the embedding space.

\begin{figure}[h]
    \begin{pcvstack}[center,boxed]
        \procedure[linenumbering]{$\mathbf{StrongRobust}_{\adv, \Pi, \delta}(\secpar)$}{
            (\Vec{h}, m, \state) \sample A(\secparam) \\
            k \sample \KG(\secparam) \\
            st \sample \Enc(k, m, \Vec{h}) \\
            st' \sample A(\state, st) \\
            \pcif \text{distance}(E(st), E(st')) > \delta \pcthen \pcreturn 0 \\
            m' \gets \Dec(k, \Vec{h}, st') \\
            \pcif m \neq m' \pcthen \pcreturn 1 \ \pcelse \pcreturn 0
        }
    \end{pcvstack}
    \caption{The strong robustness security game for a steganographic scheme $\Pi$, an adversary $A$, and an embedding distance threshold $\delta$.}
    \label{fig:strongrobust}
\end{figure}

\begin{definition}[Strong Robustness]\label{def:StrongRobustness}
    A steganographic scheme $\Pi = (\KG, \Enc, \Dec)$ is said to be $\delta${\bf -robust} if for every probabilistic polynomial-time adversary $A$, there exists a negligible function $\nu(\cdot)$ such that:
\[
\advantage{rbst}{A, \Pi, \delta} \deq \Pr \left[ {\bf StrongRobust}_{\adv,\Pi,\delta}(\secpar) = 1 \right] \leq \nu(\lambda),
\]
where \(\delta > 0\) is a distance threshold such that \(\text{distance}(E(st), E(st')) \leq \delta\) ensures the embedding of the tampered stegotext $st'$ is close to that of the original $st$. Figure~\ref{fig:strongrobust} contains the definition of the strong robustness security game.
\end{definition}

\FloatBarrier
\section{Constructing Robust Steganographic Schemes}

In this section, we present our schemes for robust steganography relying on LLMs. We design a scheme based on using multiple watermarks to embed information into a covertext that satisfies the conditions needed for security under the weak robustness model and design a scheme based on rejection sampling the hash outputs of text embeddings that satisfies the conditions needed for security under the strong robustness model. We additionally prove their correctness and security for their respective threat models.

\subsection{Watermarking Scheme for Weak Robustness}
\label{sec:watermarkscheme-theory}

Our approach builds on prior work, using techniques that embed a watermark as a primitive into text generation, where the presence of a watermark can be interpreted as encoding a single bit of information. We extend this concept to encode multiple bits, leveraging statistical detection techniques (Appendix~\ref{covertext_length_derivation}) to determine which watermarking keys were used. This enables reliable message recovery for multi-bit messages while ensuring covertexts are indistinguishable from randomly sampled messages in the covertext space.

\subsubsection{Encoding Process.}

To embed a message $m$, we generate text using an LLM while perturbing token probabilities based on $k$ and $m$. For simplicity, assume we encode messages $m \in \{0,1\}^\ell$ of bounded length $\ell$, with the secret key structured as $k = k_1 \| \cdots \| k_\ell$, where each $k_i$ is an independent watermarking key.\footnote{For longer messages, $k$ can be derived using a pseudorandom function $F$, setting $k_i = F_k(i)$.} Conceptually, for a message $m = m_1 m_2 \cdots m_\ell$, the generated text is influenced by the watermarking keys $\{ k_{i} \}_{i : m_i = 1}$ but remains independent of $\{ k_{i} \}_{i : m_i = 0}$.

More formally, during text generation, when sampling the $j$th token $T_j$, a pseudorandom function (PRF) $F$ which outputs a bitstring in $\{0,1\}^N$ (where $N$ is the vocab size of the LLM) is applied to the preceding $c$-gram of tokens:

\[
r = F_k(T_{j-c}, \dots, T_{j-1})
\]

The output $r$ determines how the probability distribution over $T_j$ is perturbed—boosting tokens that correspond to $r_i = 1$ while suppressing tokens that correspond to $r_i = 0$. This ensures that subtle modifications to token likelihoods encode information in a way that is recoverable but not easily detectable without the key $k$.

\subsubsection{Addressing Token Reuse and Detection Risks.}

A challenge in steganography is ensuring that the perturbations do not create detectable statistical artifacts. If $c$ is too small, certain $c$-grams may repeat frequently, revealing that the text distribution has been altered. To mitigate this, we introduce a salting mechanism: both sender and receiver maintain a deterministic state $s$ derived from the conversation history (e.g., the number of exchanged messages). Instead of computing $F_k(T_{j-c}, \dots, T_{j-1})$ directly, we use:

\[
r = F_k(s, T_{j-c}, \dots, T_{j-1})
\]

This ensures that even for the same $c$-gram, different conversations produce different perturbations, reducing the likelihood of detection.

Across different messages, varying the salt value means that repeated $c$-grams are unlikely to receive the same perturbation. Thus, an adversary attempting to detect a persistent watermark by analyzing recurring patterns will face increased difficulty, as different texts will exhibit different perturbations even under identical generation conditions.

\subsubsection{Decoding via Statistical Inference.}

To recover the hidden message, we leverage the statistical properties of the watermarking mechanism. We test whether each key $k_i$ was used to perturb the token distribution. The detection follows the methodology from prior watermarking work~\cite{red_green_watermark}:

\begin{enumerate}
	\item The decoder reconstructs the set of boosted and suppressed tokens for each $k_i$.
	\item It analyzes whether the sampled tokens in the observed text align with the expected distribution shifts for a given $k_i$.
	\item Using statistical tests (e.g., a z-test over token frequency deviations), the decoder determines whether the perturbations associated with $k_i$ are present.
	\item If the statistical test indicates the presence of perturbations from $k_i$, we infer $m_i = 1$; otherwise, we infer $m_i = 0$.
\end{enumerate}

\subsubsection{Robustness Against Targeted Attacks.}

An adversary might attempt to remove a hidden message by selectively disrupting certain bits. To defend against such targeted attacks, we use a key-selection strategy. Rather than deterministically cycling through the watermarking keys $k_i$ for bits $m_i = 1$, we dynamically select a key from this set at each covertext token generation step.

Specifically, we maintain a list of keys:

\[
K_m = \{ k_i \mid m_i = 1 \}
\]

and use a PRF to randomly select a key from this list for each token position $j$:

\[
k_j = K_m[F_k(j) \mod |K_m|].
\]

This selected key $k_j$ is then used in another PRF, salted with the conversation state $s$, to perturb the probability distribution when generating token $T_j$:

\[
r = F_{k_j}(s, T_{j-c}, \dots, T_{j-1}).
\]

This process ensures that the perturbation is unpredictable while still recoverable by the recipient, as different keys are sampled dynamically while maintaining alignment between sender and receiver. This makes it difficult for an attacker to selectively erase individual bits without corrupting the entire message, as key usage is non-trivial to predict.

\subsubsection{Algorithm Overview}
\leavevmode\par
\medskip
Here we give a high-level overview of the encoding and decoding algorithms. Full pseudocode is provided in Algorithm~\ref{alg:perturb} and Algorithm~\ref{alg:encode} for the encoding process, and in Algorithm~\ref{alg:decode} for the decoding process. \\

\noindent \textbf{Encoding Algorithm:}
\begin{enumerate}
	\item Compute salt $s$ from conversation history.
	\item For each token $T_j$:
	\begin{enumerate}
		\item Compute $k_j$
    	\item Compute $r = F_{k_j}(s, T_{j-c},\dots,T_{j-1})$.
    	\item Modify the probability distribution over $T_j$ using $r$.
    	\item Sample $T_j$ from the perturbed distribution.
	\end{enumerate}
	\item Output the generated text as stegotext.
\end{enumerate}

\noindent \textbf{Decoding Algorithm:}
\begin{enumerate}
	\item Compute boosted and suppressed token lists for each possible watermarking key $k_i$.
	\item Count occurrences of boosted and suppressed tokens in the observed text.
	\item Perform statistical tests to determine whether the distribution matches the expected perturbations for $k_i$.
	\item Infer the hidden bit $m_i$ based on statistical confidence.
\end{enumerate}

This watermarking-based approach provides a practical steganographic mechanism for weak robustness. The integration of salting and dynamic key selection weakens adversarial attacks, ensuring that hidden messages remain difficult to detect while remaining recoverable by authorized recipients.

\begin{algorithm}
\caption{Perturb}\label{alg:perturb}
\begin{algorithmic}[1]
\Require A probability distribution $\vec{p}=(p_1,\ldots,p_N)$, a binary vector $\vec{r}\in\{0,1\}^N$, and a perturbation parameter $\delta\in(0,1)$.
\State Let $\mathcal{I} \gets \{\,i\in[N]: p_i\in[2\delta,1-2\delta]\,\}$
\State Let $w \gets |\{\,i\in\mathcal{I} : r_i=1\,\}|$
\State Compute $\delta' \gets \dfrac{\delta\,w}{N-w}$
\ForAll{$j\in\mathcal{I}$}
    \If{$r_j=1$}
       \State Set $p'_j \gets p_j + \delta$
    \Else
       \State Set $p'_j \gets p_j - \delta'$
    \EndIf
\EndFor
\ForAll{$j\in [N]\setminus\mathcal{I}$}
    \State Set $p'_j \gets p_j$
\EndFor
\State \Return $\vec{p'} = (p'_1,\ldots,p'_N)$
\end{algorithmic}
\end{algorithm}

\begin{algorithm}
\caption{Encode}\label{alg:encode}
\begin{algorithmic}[1]
\Require key $k = (k_1,\ldots, k_\ell)$, history $\vec{h}$, message to be encoded $m\in \{0,1\}^\ell$, hardness parameter $\delta$
\State Compute the salt $s$ from the history $\vec{h}$.
\For{$j=1,2,\ldots$}
    % \State Set $i \gets (t \mod \ell) + 1$
    \State Sample $i \leftarrow \{ j\in [\ell] \ : \ m_j = 1  \}$ using a PRF
    \State Apply the language model over previous tokens to get a probability distribution $\vec{p}$ over the $t$th token
    \State Compute $\vec{r} \gets F_{k_i}(s,T_{j-c},\ldots, T_{j-1})$, where $T_{j-c},\ldots, T_{j-1}$ are the $c$ prior tokens
    \State Compute the perturbed distribution $\Vec{p'} \gets \Perturb(\vec{p},\Vec{r},\delta)$.
    \State Sample the next token $T_j$ using the distribution $\Vec{p'}$.
\EndFor \\
\Return the generated text as the stegotext $st \gets T_1 T_2 \ldots$
\end{algorithmic}
\end{algorithm}

\begin{algorithm}
\caption{Decode}\label{alg:decode}
\begin{algorithmic}[1]
\Require key $k = (k_1,\ldots, k_\ell)$, history $\vec{h}$, stegotext $ct = T_1 T_2 \ldots$, threshold parameter $z$
\State Compute the salt $s$ from the history $\vec{h}$.
\State Initialize counters $ctr_1,\ldots, ctr_\ell \gets 0$
\For{$j=1,2,\ldots$}
    % \State Set $i \gets (t \mod \ell) + 1$
    \For{$i = 1,\ldots, \ell$}
    \State Compute $\vec{r} \gets F_{k_i}(s,T_{j-c},\ldots, T_{j-1})$, where $T_{j-c},\ldots, T_{j-1}$ are the $c$ prior tokens
    \State Let $q \in \{0,1,\cdots,N-1\}$ denote the index of $T_j$ within the vocabulary. If $r_q = 1$, increment $ctr_i \gets ctr_i + 1$
\EndFor 
\EndFor
\State For each $i \in [\ell]$, if $ctr_i \geq z$, set $m_i \gets 1$. Otherwise, set $m_i \gets 0$ 
\State \Return $m = m_1 m_2 \cdots m_\ell$
\end{algorithmic}
\end{algorithm}

\paragraphb{Proofs}
Proofs are given in Appendix~\ref{watermark_theorems}.

\subsection{Embedding Scheme for Strong Robustness}
\label{sec:embedscheme-theory}

To achieve strong robustness, we leverage text embeddings and a locally sensitive hash (LSH) function to encode messages in a way that is resilient to paraphrasing attacks. The core idea is to use rejection sampling to generate stegotext whose embedding hashes to the desired message bits. We assume that the LSH is keyed in the following explanation or that it is used on already encrypted data to ensure confidentiality.

\paragraphb{Main Idea} 
The embedding scheme encodes a message \(m \in \{0,1\}^n\) by breaking it into chunks of \(l\) bits each. For each chunk, rejection sampling is applied during text generation to ensure that the embedding of the generated stegotext maps to the corresponding \(l\)-bit chunk under the LSH. This process continues iteratively until all \(n\) bits of the message are encoded. The decoding process reconstructs \(m\) by computing the LSH of the received text embeddings and concatenating the resulting \(l\)-bit hashes.

\paragraphb{Encoding}
The encoding algorithm ensures that the generated text’s embedding satisfies the desired hash values. The process is as follows:
\begin{enumerate}
    \item \textbf{Initialization:} Break the message \(m\) into \(\lceil n / l \rceil\) chunks, \(m_1, m_2, \ldots, m_{\lceil n / l \rceil}\), where each chunk \(m_i \in \{0,1\}^l\).
    \item \textbf{Chunk-wise Generation:} For each chunk \(m_i\):
    \begin{itemize}
        \item Generate a candidate text \(x_i\) using the LLM.
        \item Compute the embedding \(E(x_i)\) of the generated text.
        \item Apply the LSH function \(H(E(x_i))\) to the embedding.
        \item If \(H(E(x_i)) = m_i\), accept \(x_i\) and move to the next chunk. Otherwise, repeat the process with a new candidate text.
    \end{itemize}
    \item \textbf{Output:} Concatenate the accepted texts \(x_1, x_2, \ldots\) to form the stegotext \(st\).
\end{enumerate}

\paragraphb{Decoding}
The decoding algorithm reconstructs the hidden message \(m\) from the received stegotext \(st\):
\begin{enumerate}
    \item \textbf{Chunk-wise Decoding:} Break the stegotext \(st\) into the same chunks \(x_1, x_2, \ldots\) used during encoding.
    \item \textbf{Hash Extraction:} For each chunk \(x_i\), compute its embedding \(E(x_i)\) and apply the LSH function \(H(E(x_i))\) to extract the \(l\)-bit hash.
    \item \textbf{Message Reconstruction:} Concatenate the hashes of all chunks to reconstruct \(m\).
\end{enumerate}

\paragraphb{Robustness to Paraphrasing}
The LSH ensures that semantically equivalent paraphrases map to the same hash with high probability, as long as the embeddings remain within a certain proximity. This guarantees that the system tolerates adversarial paraphrasing, provided the changes do not cause significant shifts in the embedding space.

\paragraphb{Advantages}
This approach offers several benefits:
\begin{itemize}
    \item \textbf{Robustness:} The use of LSH and rejection sampling ensures resilience to paraphrasing, aligning with the strong robustness definition.
    \item \textbf{Flexibility:} The bit-length \(l\) of the hash can be adjusted to balance efficiency and the reliability of the chosen LSH.
    \item \textbf{Scalability:} The rejection sampling process ensures that any \(n\)-bit message can be encoded reliably, regardless of its length.
\end{itemize}

\paragraphb{Challenges}
The main challenges of this approach lie in constructing LSH functions with suitable accuracy and the potential computational cost of rejection sampling.

\paragraphb{Proofs}
Proofs are given in Appendix~\ref{embedding_theorems}.
\FloatBarrier
\section{Implementation and Measurement}
\label{implementation-and-measurement}

In this section, we detail our publicly available reference implementation of the two proposed steganographic systems: the watermarking-based scheme (for weak robustness) and the embedding-based scheme (for strong robustness). Both systems are fully open-source and available at \url{https://github.com/NeilAPerry/robust_steganography}. The code is designed to be modular, allowing developers to:
\begin{itemize}
	\item Integrate new covertext sources (e.g., GPT-2, custom Hugging Face transformers), new perturbing functions for sampled tokens, and new PRFs for watermarking.
	\item Integrate new embedding models, new covertext sources, new LSH functions, new error correcting codes, new data encoders, and train new models for particular speech patterns for the embedding-based scheme.
	\item Configure attack models to test new and existing options.
\end{itemize}

We also describe our automated attack suite, which simulates real-world tampering scenarios. Finally, we provide preliminary performance measurements, illustrating each system’s performance and resilience to various attacks.

\subsection{Implementation Overview}
Our implementation is structured as follows:

\begin{itemize}
	\item \textbf{Covertext Generation:}
	\begin{itemize}
    	\item \emph{Watermarking Scheme:} By default, we include GPT-2 and a nanoGPT Shakespeare model \cite{nanogpt} for demonstration. Any GPT-style transformer exposing next-token probabilities can be added.
    	\item \emph{Embedding Scheme:} We provide example scripts using ``gpt-4o-mini'' for text generation, but the system can interface with any OpenAI API-compatible LLM.
	\end{itemize}

	\item \textbf{Message Encoding \& Decoding:}
	\begin{itemize}
    	\item \emph{Watermarking Scheme:} Perturbs token distributions (Section~\ref{sec:watermarkscheme-theory}) to embed bits, with a salting mechanism reducing repeated $c$-gram patterns across multiple messages.
    	\item \emph{Embedding Scheme:} Uses a Locality-Sensitive Hash (LSH) to ensure the text’s embedding aligns with target bits (Section~\ref{sec:embedscheme-theory}), employing rejection sampling until a suitable candidate is found.
	\end{itemize}

	\item \textbf{Error Correction and Data Encoding:}
	\begin{itemize}
    	\item \emph{Repetition Codes} and \emph{Convolutional Codes} help mitigate lower-than-ideal raw accuracy in the LSH. When stronger error correcting codes (ECC) are used, encoded messages grow in length, balancing accuracy vs. covertext length.
    	\item Data-encoding utilities allow users to embed text and ciphertext blocks under several encoding schemes.
	\end{itemize}

	\item \textbf{LSH:}
	\begin{itemize}
    	\item \emph{Random Projection LSH} as a baseline.
    	\item \emph{PCA-based LSH} (Section~\ref{sec:lsh-design}), trained on corpora such as Enron emails and Wikipedia articles to provide domain-tailored projections.
	\end{itemize}

	\item \textbf{Attack Simulator:} Automates realistic tampering methods, including $n$-gram shuffling, synonym substitution, and AI-based paraphrasing. Users can set parameters\footnote{The attack system includes three main modes: $n$-gram shuffle which breaks the text into $n$-grams and shuffles them; synonym swaps which replaces words with their synonyms; and paraphrase which has an LLM paraphrase the text. These include local and global options, which runs the attack contained within 1 sentence chunks or the entire text respectively. One can also specify the tokenizer, parameters such as $n$, the proportion of the text that should be attacked, the source/method of synonyms, the model to use for paraphrasing, and set the temperature of the paraphrasing model.} (e.g., fraction of replaced text and local or global options) to gauge robust message recovery under various threat levels.
\end{itemize}

This modular design allows new cryptographic primitives, language models, LSH variants, or ECC schemes to be used with minimal changes to the rest of the system.

\subsection{Watermarking-Based System}
\label{sec:watermarkscheme-implementation}

Our watermarking-based system (Section~\ref{sec:watermarkscheme-theory}) encodes bits into covertext by perturbing token probability distributions. Key components include:

\begin{itemize}
	\item \textbf{Token Perturbation:}
	We implement two sample perturbation functions—one operating on undetectable subtle shifts (similar to and Kirchenbauer~et~al.'s and Aaronson's watermarking schemes \cite{red_green_watermark,aaronson_talk}) and a ``sharp’’ perturbation that zeros out token probabilities for debugging and testing. Both extract the raw token probabilities from the chosen LLM and apply a keyed pseudorandom function (PRF) to ``boost’’ or ``suppress’’ selected tokens.

	\item \textbf{PRF Integration:}
	Our default setup uses HMAC-based PRFs, but the design allows any keyed PRF to be provided by the user and an example of adding an AES based option. PRF outputs guide which token probabilities get altered; users can also configure a perturbation strength \(\delta\).

	\item \textbf{Length Estimation:}
	Before encoding, the system calculates required covertext length requirements using empirical statistics (Appendix~\ref{covertext_length_derivation}) to ensure the hidden message is recoverable at a specified probability \(\epsilon\).

	\item \textbf{History Salt:}
	Each new message is salted with a function of the current conversation state so repeated $c$-grams across multiple stegotexts do not yield identical perturbations.
\end{itemize}

\paragraph{Usage and Customization.}
Users configure model selection, PRF parameters, $c$-gram size, and perturbation parameters. The system produces a final stegotext.

\subsection{Embedding-Based System}
\label{sec:embedscheme-implementation}

In the embedding-based system (Section~\ref{sec:embedscheme-theory}), candidate texts are sampled and then accepted or rejected based on their LSH output. The main features are:

\begin{itemize}
	\item \textbf{Flexible Embedding Models:}
	Our examples use a “gpt-4o-mini” model for text generation and \(\texttt{text-embedding-3-large}\) for embedding extraction, though other OpenAI API endpoints can be substituted.

	\item \textbf{Rejection Sampling:}
	If the text’s embedding fails to yield the correct LSH bits, the sample is rejected and a new one is generated.

	\item \textbf{Error Correction:}
	The built-in repetition and convolutional codes help protect hidden bits from hashing errors. With multi-bit hashes, strong ECC often becomes crucial to maintain acceptable decoding success rates.
\end{itemize}

\paragraph{Cost Tracking.}
For the embedding-based system using OpenAI-based text generation, we empirically find the cost to be less than 1¢ per hidden byte. See Section~\ref{sec:embedding_measurements} for a full analysis.

\subsubsection{LSH Design}
\label{sec:lsh-design}

We provide two LSH variants:

\begin{itemize}
	\item \textbf{Random Projection:}
	A baseline approach that projects embeddings onto random hyperplanes to generate bits.

	\item \textbf{PCA-based Hash:}
	\begin{itemize}
    	\item \emph{Training:} A PCA model is trained on text embeddings from a target domain (e.g., Enron “corporate speech,” Wikipedia articles).
    	\item \emph{Hashing:} Each principal component forms a dimension in the transformed embedding space, thresholded to yield one bit. Combining multiple dimensions produces an $l$-bit hash.
    \end{itemize}
\end{itemize}

To measure performance we ran two tests. In the first, we sampled the next message from an initial history 1000 times, paraphrased each sampled message, and compared the accuracy of both functions (they are the same if both messages hash to the same value and different otherwise). The Random Projection had an accuracy of 87.6\% and the PCA Hash had an accuracy of 91.6\%. In the second, we conintuously sampled 1000 messages using the initial history plus the previous sampled messages as the history, paraphrased each sampled message, and compared the accuracy of both functions. The Random Projection had an accuracy of 77.1\% and the PCA Hash had an accuracy of 83.8\%\footnote{These were particularly long samples, and for shorter messages we observe rates of 93\% for the Random Projection Hash and 95-97\% for the PCA Hash.}.

\subsubsection{Simulator for Development and Testing}
A separate simulator tool enables evaluation of different embeddings or the effects of new LSH algorithms by using an oracle. This allows experimentation with minimal API usage, particularly when tuning parameters.

\subsection{Attack Simulator}
\label{sec:attack-simulator}

To assess robustness, we provide an attack suite that modifies the stegotext in realistic ways:

\begin{itemize}
	\item \textbf{$n$-Gram Shuffling:}
	Randomly shuffles small text segments, measuring how the watermarking scheme and embedding-based approach handle attacks such as reordering sections of text.

	\item \textbf{Synonym Substitution:}
	Replaces words with synonyms. Users can tune the fraction of tokens replaced, balancing realism and severity.

	\item \textbf{AI Paraphrasing:}
	Uses an LLM to paraphrase sentences (local) or larger blocks of text (global), reflecting scenarios in which adversaries attempt to preserve semantic meaning while altering how the message is phrased.
\end{itemize}

After each attack, the stegotext is fed into the decoding function to measure recovered message rates.

\subsection{Performance Evaluation}

We conducted experiments to measure each system’s performance, covertext overhead, and robustness against attacks. Our primary metrics include frequency of recurring n-grams within a text, how parameters affect covertext length, message recovery accuracy, rejection sampling efficiency, embedding drift under attacks, and the cost per hidden bit when using paid APIs.

\subsubsection{Watermarking-Based System}

\paragraph{Recurring c-grams}
We measure how frequently $c$-grams occur in generated text as $c$ increases (Figure~\ref{fig:recurring_c_grams}). Intuitively, higher values of $c$ repeat less frequently, leaving little to be learned by adversaries. All values $c > 2$ occur at low rates in covertexts of 100 tokens.

\begin{figure}[h]
    \centering
    \includegraphics[width=0.45\linewidth]{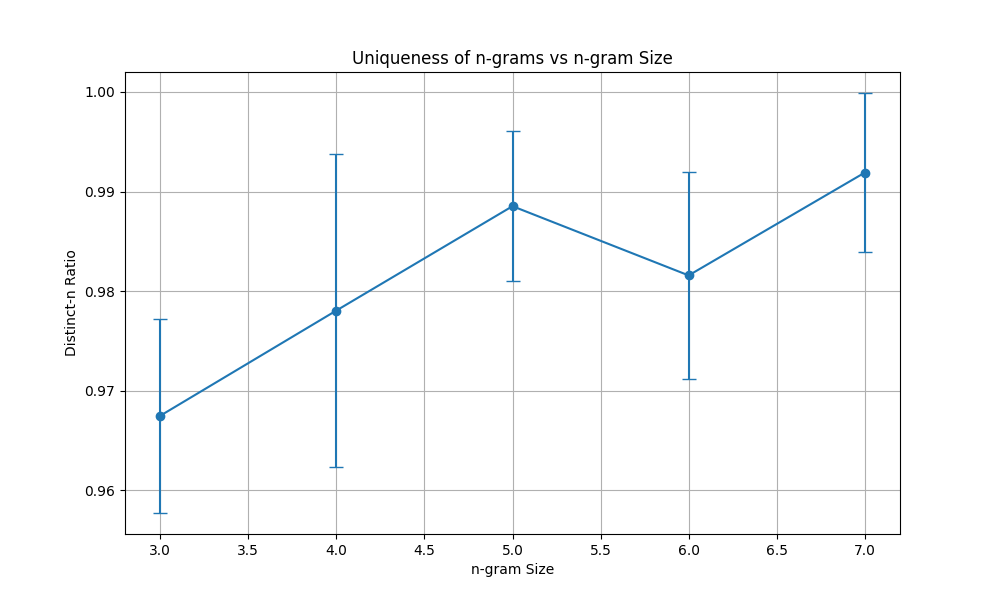}
    \caption{Frequency of recurring $c$-grams is measured in 100 samples of text containing 100 tokens sampled from GPT-2. As the size of $c$ increases, duplicates are less likely to be found.}
    \label{fig:recurring_c_grams}
\end{figure}

\paragraph{$\delta$ Effect on Covertext Length}
We measure how varying $\delta$ impacts covertext length (Figure~\ref{fig:delta_covertext_effect}). We find that as $\delta$ increases, the required covertext length decreases. The trade-off lies in the fact that larger values of delta disrupt the ``naturalness'' of generated text, with high values violating the stealthyness of the scheme.

\begin{figure}[h]
    \centering
    \includegraphics[width=0.45\linewidth]{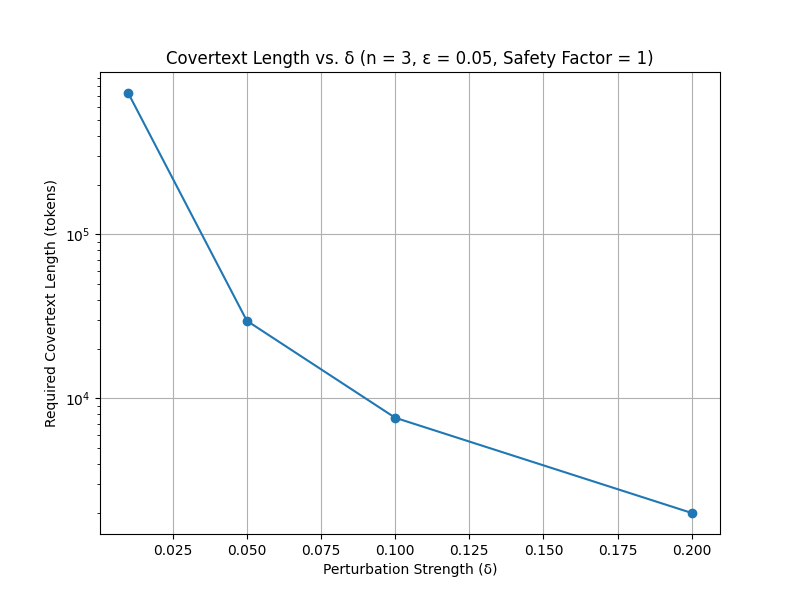}
    \caption{$\delta$'s effect on covertext length. Keeping a fixed $n=3$ and safety factor of 1, we measure the necessary length of the covertext to successfully hide messages with 95\% accuracy.}
    \label{fig:delta_covertext_effect}
\end{figure}

\subsubsection{Embedding-Based System}
\label{sec:embedding_measurements}

\paragraph{Rejection Sampling Efficiency.}
We measure how many samples (on average) are needed to find a contextually relevant message whose embedding hashes to the desired value (Figure~\ref{fig:embedding_rejection_sampling}). We test this for both the Random LSH and the PCA LSH for 1 bit outputs and simulate trials for multi-bit outputs (see Figure~\ref{fig:embedding_rejection_sampling}). This was done over 100 trials and we did not cap the amount of attempts. We found that the Random LSH took 2.82 samples on average over the 100 trials. This differs from the expected $2^n$ hashes due to the next likely tokens being biased to one side of the random cut. The PCA LSH took 2.03 samples on average over 100 trials. The corporate prompt can be found at Appendix~\ref{pca_prompt}. This corresponds to corporate emails from the Enron dataset and was specifically made for the PCA LSH.

\begin{figure}[h]
    \centering
    \includegraphics[width=0.45\linewidth]{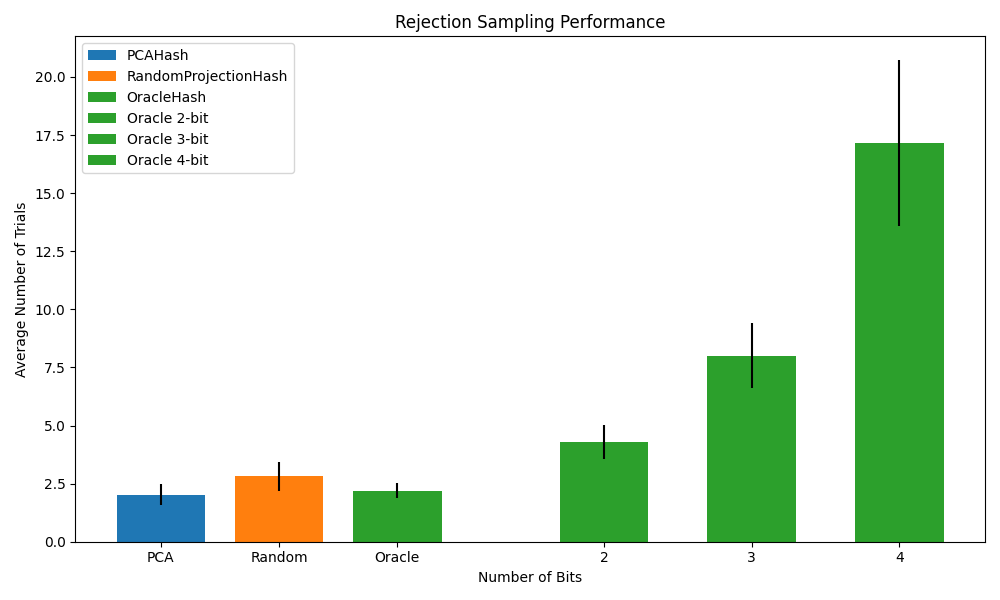}
    \caption{Measurements of embeddings rejection sampling. The Random and PCA LSH functions hash to 1 bit and used a conversational and corporate conversation mode respectively. The corporate mode is designed for the PCA LSH. LSH functions that output more than 1 bit were simulated via the Oracle LSH.}
    \label{fig:embedding_rejection_sampling}
\end{figure}

\paragraph{Embedding Drift.}
We measure how various attacks impact the similarity (Euclidean distance and cosine similarity) of embeddings before and after the attack (Figure~\ref{fig:embeddings_attack_error}). This helps inform which LSH functions can resist certain attack types. As expected, increasing the percentage of the text that is attacked creates further distance between the pair of text-- although n-gram shuffle attacks and paraphrase attacks remain close even at high percentages. Surprisingly, the synonym attack is what causes large changes. This is due to nltk's wordnet being used for the attack and making overly aggressive substitutions. These substitutions include choices that are not grammatically correct, such as changing ``not'' to ``non'' or words that change the meaning of sentences. Therefore, many synonym attacks fall outside the scope of the strong robustness threat model as they are not changes that the adversary would actually want to make and are therefore not a source of concern.

\begin{figure}[h]
    \centering
    \begin{subfigure}[b]{0.45\textwidth}
        \centering
        \includegraphics[width=\textwidth]{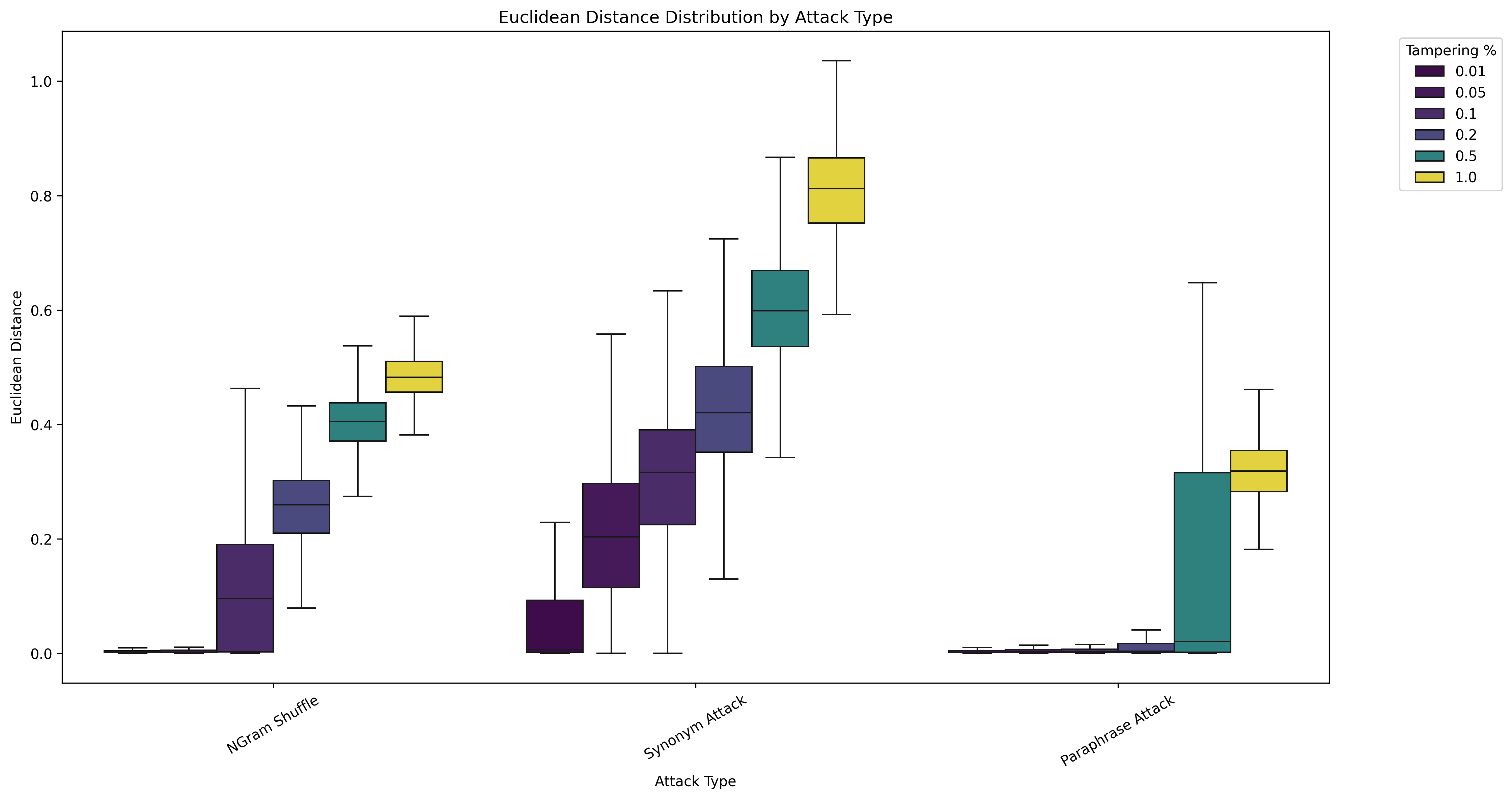}
        \caption{Euclidean distance for various attack categories and percentage of text attacked.}
        \label{fig:embeddings_attack_error_euclidean}
    \end{subfigure}
    \hfill
    \begin{subfigure}[b]{0.45\textwidth}
        \centering
        \includegraphics[width=\textwidth]{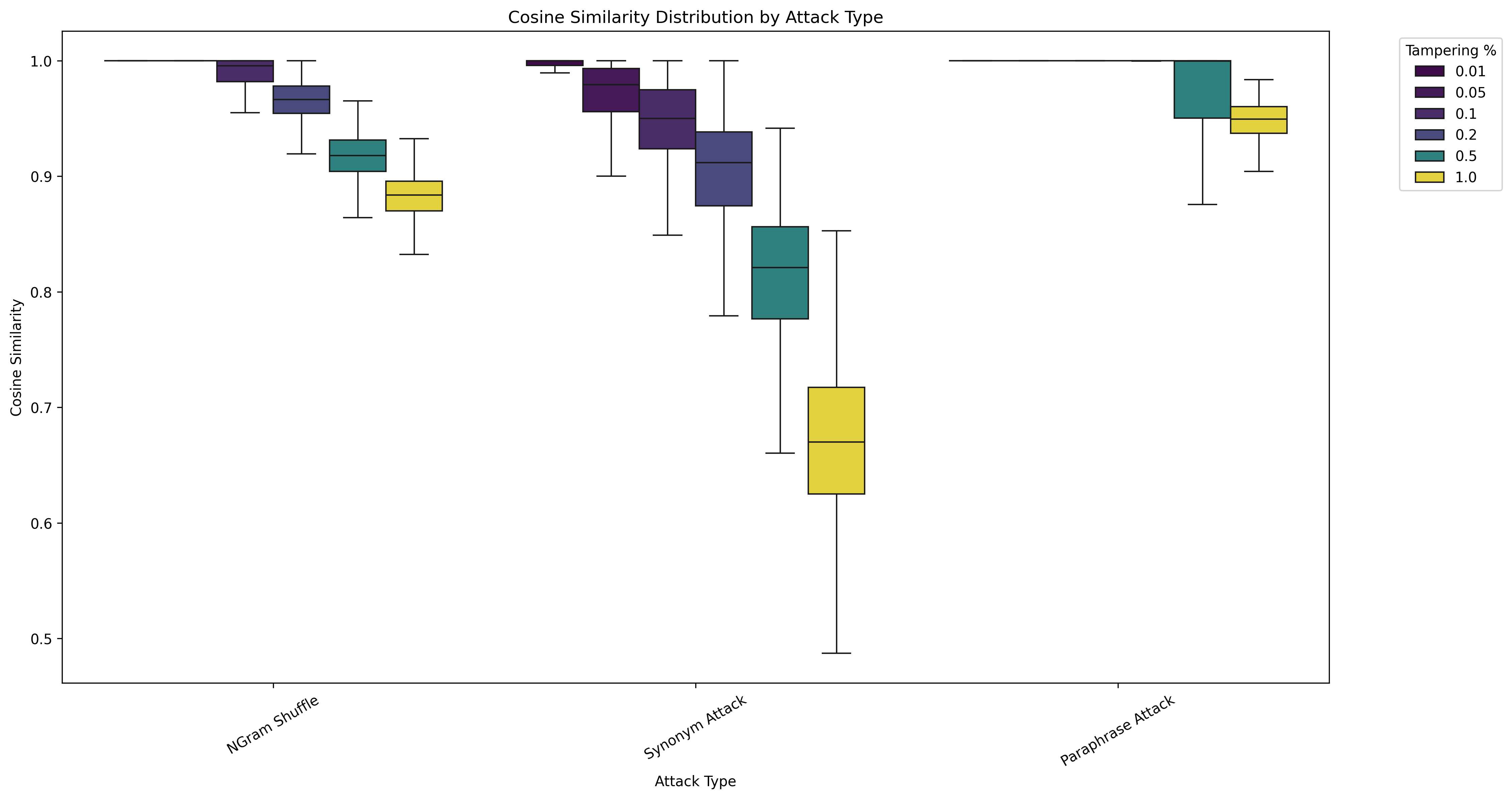}
        \caption{Cosine similarity for various attack categories and percentage of text attacked.}
        \label{fig:embeddings_attack_error_cosine}
    \end{subfigure}
    \caption{}
    \label{fig:embeddings_attack_error}
\end{figure}

\paragraph{Cost per Hidden Bit.}
Here we first derive the cost of hiding an $n$ bit message and then empirically test the cost with OpenAI's API using \texttt{gpt-4o-mini} and \texttt{text-embedding-3-large}. We define the variables as follows:
\begin{itemize}
    \item $n$: Total number of bits to hide.
    \item $h$: Number of bits hidden per successful covertext chunk (i.e., the hash output size).
    \item $c$: Average number of queries needed per successful chunk (for a uniform hash, $c \approx 2^h$).
    \item $W$: Number of input tokens (context window tokens) per query.
    \item $T$: Number of output tokens (covertext chunk tokens) per query.
    \item $p_{\text{in}}$: Cost per input token.
    \item $p_{\text{out}}$: Cost per output token.
\end{itemize}

Each successful chunk hides $h$ bits, so the number of successful chunks required is
\[
\text{Successful Chunks} = \frac{n}{h}.
\]
Since each chunk requires $c$ queries on average, the total number of queries is
\[
\text{Total Queries} = \frac{n}{h} \cdot c.
\]
The cost per query is given by
\[
\text{Cost per Query} = W \cdot p_{\text{in}} + T \cdot p_{\text{out}}.
\]
Thus, the total cost is
\[
\boxed{
\text{Total Cost} = \frac{n}{h} \cdot c \cdot \left( W \cdot p_{\text{in}} + T \cdot p_{\text{out}} \right).
}
\]
For a uniform hash where $c \approx 2^h$, this can be approximated as
\[
\text{Total Cost} \approx \frac{n}{h} \cdot 2^h \cdot \left( W \cdot p_{\text{in}} + T \cdot p_{\text{out}} \right).
\]
We empirically measure the cost of hiding a byte of data in USD by measuring the difference in API costs on the OpenAI usage dashboard before and after each trial. We hide 1 byte of data for both the random projection and PCA-based LSH. Additionally, we set a cutoff of 10 cents per attempt, as the system may enter a low-entropy state where the next continuations are overwhelmingly likely to all hash to a single output. When this happens, we choose to sample an incorrect choice and have error correction fix this on recovery. Table~\ref{table:lsh_cost_comparison} contains the results. We observe that PCA has an overall lower cost when entropy is high enough, but low-entropy situations cause the total cost to be higher. Given that the PCA LSH has a higher accuracy than the Random LSH (Section~\ref{sec:lsh-design}), the optimal choice depends several factors including, length of message, covertext distribution, bandwidth, financial and covertext budgets, and choice in error correcting code.

\begin{table}[h]
    \centering
    \begin{tabular}{|l|c|c|c|}
        \hline
        \textbf{LSH Scheme} & \textbf{Success Rate (\%)} & \textbf{Avg Cost (Success) (¢/Byte)} & \textbf{Avg Cost (Total) (¢/Byte)} \\
        \hline
        Random & 100\% & 0.7 & 0.7 \\
        PCA & 95\% & 0.35  & 0.85 \\
        \hline
    \end{tabular}
    \vspace{1em}
    \caption{Success rate and cost comparison of different LSH schemes for our steganography system taken over 20 trials of hiding 1 byte. Due to low-entropy situations, the next continuation may overwhelmingly be likely to hash to the same value. We therefore cap the cost of hiding a byte at 10¢. The system then sample the wrong value and relies on error correction during retrieval. "Avg Cost (Success)" is the average cost per successfully hidden byte, ignoring failures. "Avg Cost (Total)" includes failures at 10¢ per attempt in the cost.}
    \label{table:lsh_cost_comparison}
\end{table}

\subsubsection{Comparison of the Two Systems}
\label{sec:comparison}

We provide a direct comparison between the watermarking-based and embedding-based approaches, focusing on covertext overhead and resistance to attacks. \\

\textbf{Covertext Overhead:} Plotting hidden message length vs. covertext length to illustrate efficiency differences. Figure~\ref{fig:token_efficiency} compares the Watermarking and Embedding system for various parameters. The embedding system uses substantially less tokens than the watermarking system for all configurations. This includes configurations of the Watermarking system with high levels of delta ($\delta \geq 0.1)$ which can significantly degrades text quality. \\

\begin{figure}[h]
    \centering
    \includegraphics[width=0.45\linewidth]{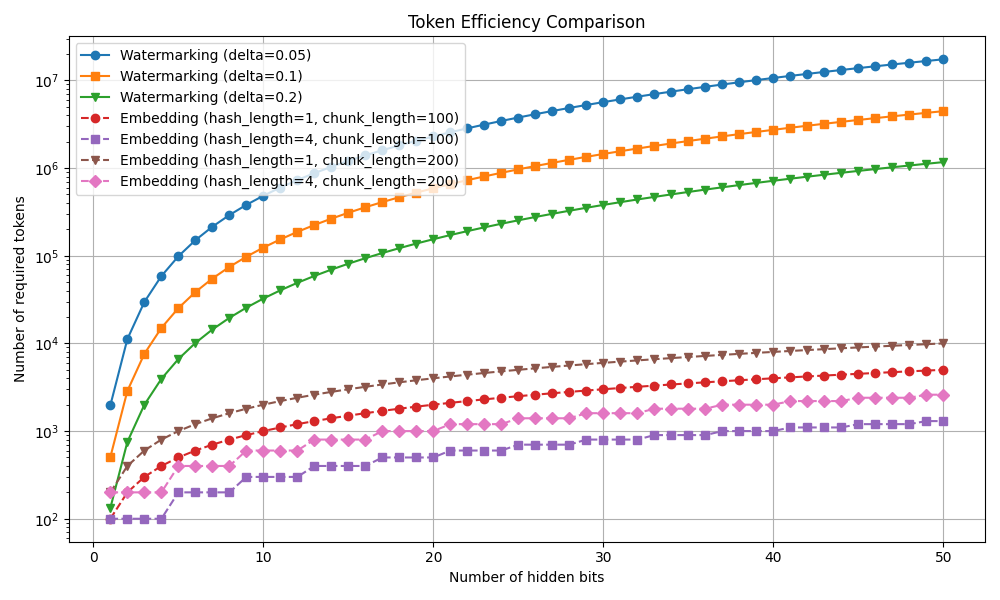}
    \caption{Covertext Overhead. The x-axis contains the number of bits being hidden within the covertext and the y-axis contains the required length of the covertext measured in tokens to hide those bits. The watermarking schemes require orders of magnitude more tokens to hide the same length message as the embedding scheme, even for high levels of delta that would not be used in practice.}
    \label{fig:token_efficiency}
\end{figure}

\textbf{Resistance to Attack:} Contrasting message recovery under varying levels of $n$-gram shuffling, synonym substitutions, and AI paraphrasing. Figure~\ref{fig:watermark_attack_reistance} contains the results of attacking various texts generated via the watermarking system and Figure~\ref{fig:embeddings_attack_reistance} contains the results of attacking various texts generated via the embedding system. For the watermarking system, attacks were run in a local mode that contained modifications to individual sentences and a global mode that attacked the entire text at once. These attacks are in some way an upper bound and can be viewed as overestimating an adversaries capabilities. For example, synonyms are chosen via nltk's wordnet and can be inappropriate, replacing words with substitution that change tense or no longer have the same meaning (i.e. replacing not with non). n-gram shuffles can create completely nonsensical sentences (i.e. changing ``The white cat jumped on the broken couch.'' to ``the broken couch cat jumped on The white.''). This becomes particularly extreme in global mode, where n-grams at the end of long texts can be shuffled together with n-grams at the beginning of a text. Therefore, local attacks may be more realistic in this sense. Additionally, local attacks serve as a more direct comparison between the watermarking and embedding systems. This is due to the challenge from their different threat models and the key difference that hidden bits are ``spread out'' throughout the entire text in the watermarking case compared to hidden bits being ``clumped together'' in discrete sections of the text in the embedding case. These different strengths and weaknesses make local attacks on watermarked text a more appropriate point of comparison to attacks on segments of the embedded text.

The watermarking system withstands n-gram shuffling attacks well up to 20\% of the text being tampered. Local tampering retains a perfect recovery, while global tampering drops to a 90\% recovery rate. It handles synonym and paraphrasing attacks up to 5\% tampering before quickly declining in both the local and global cases.

The embedding system withstands n-gram shuffling and paraphrasing attacks perfectly at all tested level ranging from 1\% to 50\% but declines slightly for synonym attacks down to 95\% recovery when half of the message blocks are attacked. This is partially from non-perfect agreement in LSH accuracies and some synonym substitutions being inapropriate and changing the meaning of text segments.

Overall, the embedding system substantially outperforms the watermarking system, both in the covertext overhead and its resistance to attacks.

\begin{figure}[h]
    \centering
    \begin{subfigure}[b]{0.45\textwidth}
        \centering
        \includegraphics[width=\textwidth]{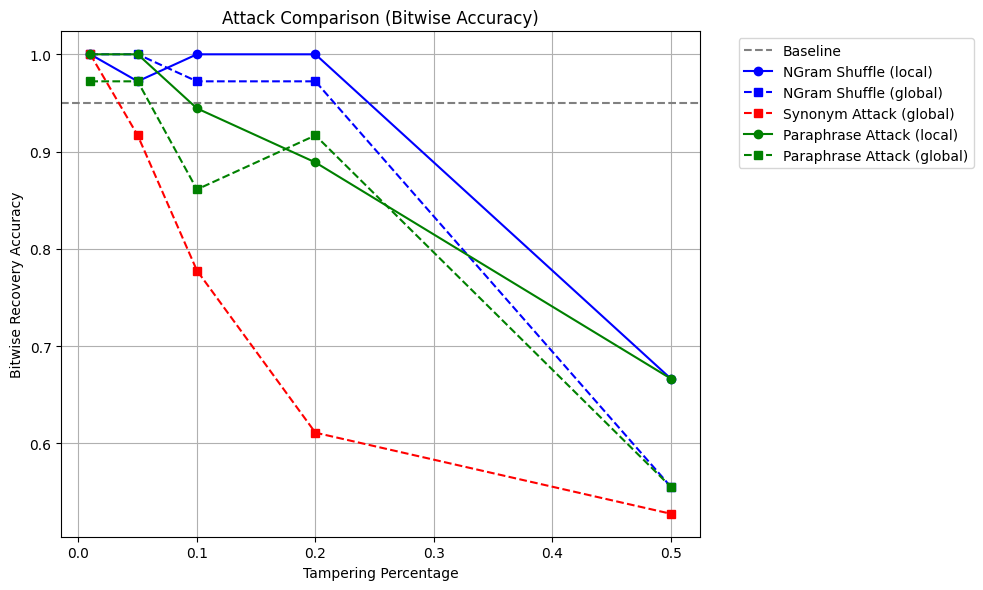}
        \caption{We measure the percentage of bits that are correctly recovered. For example, if [1, 0, 1] was hidden and [0, 0, 1] was recovered, this would contribute a score of 0.67 for that message + attack data point. \\}
        \label{fig:attack_bitwise}
    \end{subfigure}
    \hfill
    \begin{subfigure}[b]{0.45\textwidth}
        \centering
        \includegraphics[width=\textwidth]{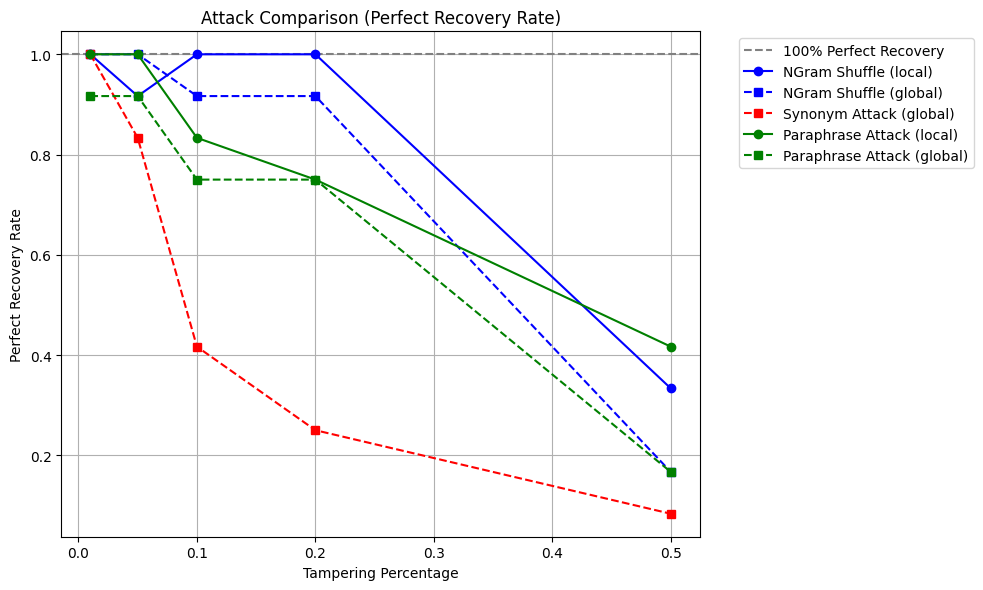}
        \caption{We measure whether the bits that are recovered perfectly match the hidden message. For example, if [1, 0, 1] was hidden and [0, 0, 1] was recovered, this would contribute a score of 0 for that message + attack data point.}
        \label{fig:attack_perfect}
    \end{subfigure}
    \caption{We run n-gram shuffle, synonym, and paraphrasing attacks against 12 stegotexts containing 3 bit messages created via the watermarking scheme for various modes and portions of the text being altered.}
    \label{fig:watermark_attack_reistance}
\end{figure}

\begin{figure}[h]
    \centering
    \begin{subfigure}[b]{0.45\textwidth}
        \centering
        \includegraphics[width=\textwidth]{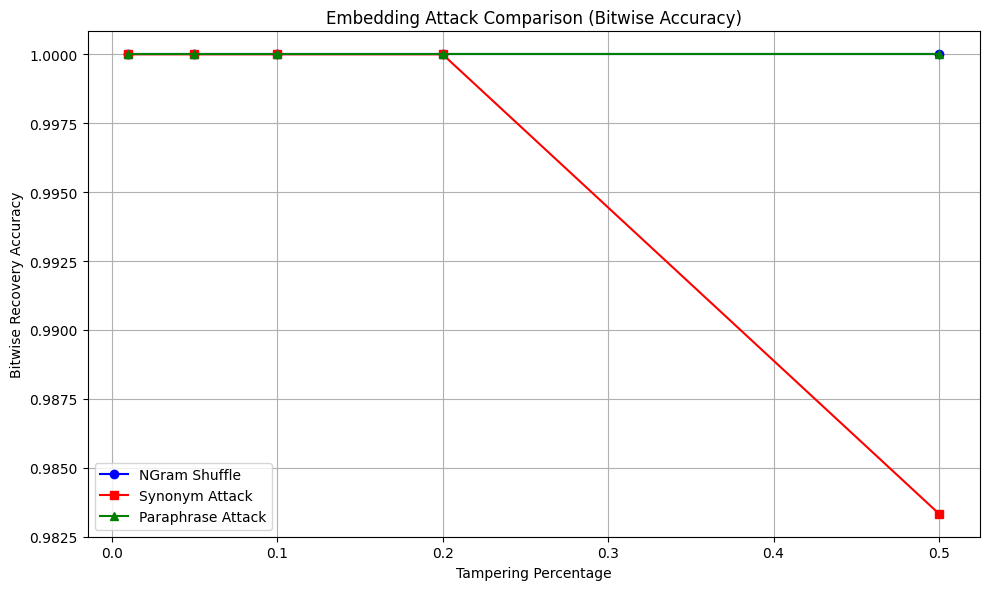}
        \caption{We measure the percentage of bits that are correctly recovered. For example, if [1, 0, 1] was hidden and [0, 0, 1] was recovered, this would contribute a score of 0.67 for that message + attack data point. \\}
        \label{fig:embeddings_attack_bitwise}
    \end{subfigure}
    \hfill
    \begin{subfigure}[b]{0.45\textwidth}
        \centering
        \includegraphics[width=\textwidth]{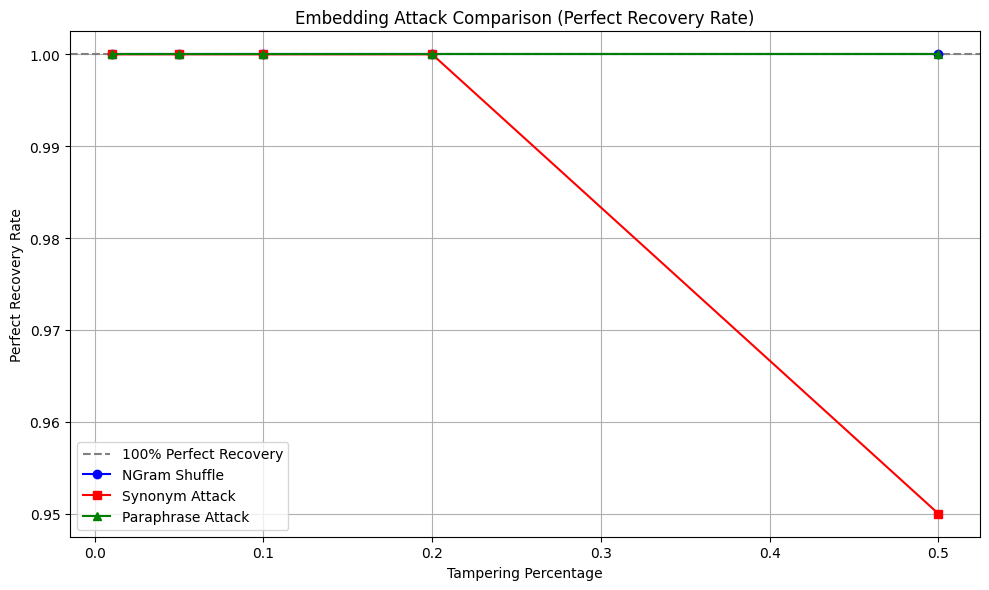}
        \caption{We measure whether the bits that are recovered perfectly match the hidden message. For example, if [1, 0, 1] was hidden and [0, 0, 1] was recovered, this would contribute a score of 0 for that message + attack data point.}
        \label{fig:embeddings_attack_perfect}
    \end{subfigure}
    \caption{We run n-gram shuffle, synonym, and paraphrasing attacks against 20 stegotexts containing 3 bit messages created via the embedding scheme. Attacks are run on each message encoding $hash\_output$ bits (closer to local mode) at various probabilities. The embedding system resists these attacks at a far stronger rate than the watermarking system, but does not possess the property of ``spreading'' the hidden bits throughout the entire text.}
    \label{fig:embeddings_attack_reistance}
\end{figure}

\subsection{Summary of Findings}

Our evaluation demonstrates that the embedding system significantly outperforms the watermarking system in both efficiency and robustness to adversarial attacks, while operating at a reasonable cost.

\textbf{Covertext Efficiency}: The embedding-based approach requires substantially fewer tokens to encode the same number of hidden bits compared to the watermarking-based approach. Even with high perturbation levels ($\delta \ge 0.1$) that degrade text quality, the watermarking system remains significantly less efficient.

\textbf{Resistance to Attacks}: The watermarking system maintains strong recovery under minor modifications (e.g., $\le$20\% n-gram shuffling) but quickly degrades when subjected to synonym substitutions or paraphrasing attacks.
The embedding system demonstrates near-perfect robustness to n-gram shuffling and paraphrasing attacks. However, synonym substitution causes slight degradation (95\% recovery at 50\% attack rate), though this is partly due to inappropriate synonym choices that lay outside of our threat model rather than inherent system weaknesses.

\textbf{Rejection Sampling and Cost}: The PCA LSH achieves higher accuracy than the Random Projection LSH (91.6\% vs. 87.6\% in first-message sampling tests).
Rejection sampling for the embedding system requires fewer iterations for the PCA LSH (2.03 samples on average) compared to the Random Projection LSH (2.82 samples).
The PCA LSH achieves lower cost per successful hidden byte (0.35¢) but is more affected by low-entropy situations, leading to higher total cost when failures are included (0.85¢/byte vs. 0.7¢/byte for the random projection).

\textbf{Practical Considerations}: The embedding system is preferred when strong robustness is required against adversarial paraphrasing. The watermarking system, despite being more fragile, might still be useful in scenarios where minor modifications are expected, and message length constraints are less strict. One key advantage of the watermarking system is that it hides hidden bits in a manner that is diffused throughout the entire text, where no particular section is tied to a particular bit. The embedding system lacks this feature.

Overall, our results highlight that embedding-based steganography provides a more practical and resilient method for covert communication, particularly in environments where adversaries employ sophisticated paraphrasing attacks.

\FloatBarrier
\section{Applications}

In this section, we outline several possible use cases for this type of system.

\subsection{Communication Between Citizens Planning Protests}
Several works have explored communication in unusual settings, such as environments with no internet access \cite{inyangson2024amigo,pradeep2022moby,bienstock2023asmesh,lerner2016rangzen,perry2022strong}, environments where people must use government mandated encryption where the government has access to the corresponding secret keys \cite{horel2018subvert}, or environments where the adversary can use algorithm-substitution attacks to change users' encryption schemes without users' knowing \cite{bellare2014security}. This work is suited to the second environment, where the adversary can essentially view all plaintext sent by users. This work further expands Horel et al. allowing the covertext to remain as English text, therefore not raising suspicion as easily and is robust to rerandomization (i.e. paraphrasing) attacks. This defeats the main attack adversaries had against similar systems, allowing for improved security for users.

\subsection{Border Crossings \& Cloud Storage}
Tyagi~et~al. \cite{tyagi2018burnbox} explores another unusual area with users with sensative needs-- dissidents and journalists at border crossings. These users possess highly sensitive data and face dire consequences if apprehended or captured. Tyagi~et~al. provide a solution through a new file system that provides self-revocable encryption, allowing users to temporarily revoke access to select sensitive files while leaving ``normal looking'' data intact on their local machine or in the cloud. Our system can be used for a similar purpose by storing sensitive files in collections of covertext. For example, sensitive information can be encoded into a series of short stories or poems. When searched, the government would see the innocent files saved to the device or cloud, along with collections of innocuous writing-- which does not seem out of place for someone whose primary job is writing. After the inspection, the hidden data within the poems and short stories can be recovered, restoring the original files. One added benefit is that there is no need to use special storage systems like BurnBox, which may in and of itself raise suspicion. For many realistic adversaries, this may be sufficient to result in punitive action.
\FloatBarrier
\section{Conclusion}
We introduced formal definitions of weak and strong robust steganography, along with threat models that justify their necessity. To address these challenges, we designed and implemented steganographic schemes that enable users to hide arbitrary secret messages within English covertext generated by large language models. These schemes ensure message recoverability even under adversarial paraphrasing and rewording attacks, significantly advancing the robustness of text-based steganography. Our work provides both theoretical guarantees and practical implementations, demonstrating the feasibility of robust steganographic communication in real-world settings. By open-sourcing our stegosystems and datasets, we aim to facilitate adoption by individuals in need while encouraging further research on key components, such as embedding-based locally sensitive hash functions. We believe this research lays a foundation for more resilient and accessible steganographic methods, contributing to secure and censorship-resistant communication.

\paragraph{\textbf{Acknowledgments.}}
We would like to thank Gabriel Poesia for helpful conversations on the design of the schemes in this work. This work was funded by NSF, DARPA, and the Simons Foundation. Opinions, findings, and conclusions or recommendations expressed in this material are those of the authors and do not necessarily reflect the views of DARPA.

\bibliographystyle{abbrv}
\bibliography{main,abbrev2,crypto}

\appendix

\section*{Appendix}

\section{Multiple Watermark Detection}
\label{covertext_length_derivation}

\subsection*{Setup and Notation}

\begin{itemize}
  \item $n$: Total number of bits in the hidden message $(m_1, \dots, m_n) \in \{0,1\}^n$.
  \item $k$: Number of 1-bits in the message, i.e., $k = \sum_{i=1}^{n} m_i$.
  \item $T$: Total number of tokens generated (i.e., the covertext length).
  \item $\delta$: Strength of the watermark perturbation. The probability of PRF-labeled “1” tokens is shifted from $p_0 = 0.5$ to
  \[
    p_w = \frac{2(1 + \delta)}{4 + \delta}, \quad \text{where } p_w > 0.5.
  \]
  \item Key usage rule: if $m_i = 1$, then key $i$ is selected with probability $1/k$ (uniformly among the 1-bits). If $m_i = 0$, then key $i$ is never used.
\end{itemize}

\subsection*{Distribution of Tokens Labeled ``1''}

For each message bit index $i$, define $s_i$ as the number of tokens in the final covertext that are labeled ``1'' by the PRF. Then:
\[
  X_i \;=\; \frac{s_i}{T}
  \;\;\; \text{is the fraction of tokens labeled 1 for bit }i.
\]

\noindent
\textbf{Case 1:} If $m_i = 0$, it is never chosen. Hence the fraction of tokens labeled ``1'' is exactly 
\[
  0.5 
  \quad\text{(no watermark shift).}
\]
That is,
\[
  \mathbb{E}[X_i] \;=\; 0.5.
\]

\noindent
\textbf{Case 2:} If $m_i = 1$, it is chosen with probability $1/k$. Whenever chosen, the fraction of tokens labeled 1 is $p_w$, otherwise it remains $0.5$. Thus:
\[
  \mathbb{E}[X_i]
  \;=\;
  \frac{1}{k} \,p_w
  \;+\;
  \Bigl(1 - \frac{1}{k}\Bigr)\,\bigl(0.5\bigr)
  \;=\;
  0.5 + \frac{1}{k}\,\bigl(p_w - 0.5\bigr).
\]

\subsection*{Binomial Z-Test for Detection}

Under the null hypothesis ($m_i = 0$), $s_i \sim \mathrm{Binomial}\bigl(T,0.5\bigr)$ approximately. Hence
\[
  \mathbb{E}[X_i] = 0.5, \quad \mathrm{Var}(X_i) = \frac{0.25}{T}.
\]
Define the Z-score for detecting a deviation:
\[
  Z_i 
  \;=\;
  \frac{\,X_i - 0.5\,}{\,\sqrt{\,0.25/T\,}}
  \;=\;
  2 \,\sqrt{T}\,\bigl(X_i - 0.5\bigr).
\]
Under the alternative hypothesis ($m_i=1$), the expected deviation is 
\[
  \Delta \;=\; \frac{1}{k}\,\bigl(p_w - 0.5\bigr).
\]
So,
\[
  \mathbb{E}[X_i] = 0.5 + \Delta
  \quad
  \text{and}
  \quad
  \mathbb{E}[Z_i]
  \;=\;
  2\,\sqrt{T}\,\Delta
  \;=\;
  2\,\sqrt{T}\,\frac{\bigl(p_w - 0.5\bigr)}{k}.
\]

\noindent Let $z_{\mathrm{th}}$ be the one-sided detection threshold. We conclude $m_i = 1$ if
\[
  Z_i \;>\; z_{\mathrm{th}},
  \quad
  \text{and otherwise conclude }m_i=0.
\]
Under the null hypothesis ($m_i=0$), $Z_i$ is approximately standard normal, so 
\[
  \mathbb{P}[\text{false positive}] \;\approx\; 1 - \Phi(z_{\mathrm{th}}).
\]
Under the alternative hypothesis ($m_i=1$), the Z-score mean shifts upward; with large $T$, the shift exceeds $z_{\mathrm{th}}$ with high probability.

\subsection*{Minimum Covertext Length $T$ for Reliable Detection}

To ensure per-bit false positive and false negative rates of at most $\alpha$, set:
\[
  z_{\mathrm{th}} \;=\;\Phi^{-1}(1-\alpha).
\]
Reliable detection requires 
\[
  \mathbb{E}[Z_i] 
  \;>\;
  z_{\mathrm{th}},
\]
which implies
\[
  T 
  \;>\;
  \frac{\;z_{\mathrm{th}}^{2}\,k^{2}\;}
       {\;4\,\bigl(p_w - 0.5\bigr)^2\;}.
\]
This ensures that each \emph{1}-bit surpasses the threshold, while each \emph{0}-bit does not with high probability.

\subsection*{Multi-Bit Recovery}

To recover all $n$ bits with overall error probability $\le \epsilon$, we can set each bit's $\alpha_{\mathrm{bit}}=\epsilon/n$. Then 
\[
  z_{\mathrm{th}}
  \;=\;
  \Phi^{-1}\!\Bigl(1 - \tfrac{\epsilon}{n}\Bigr).
\]
In the worst case ($k \le n$), set $k_{\max}=n$. Then a sufficient bound is
\[
  T
  \;>\;
  \frac{\;z_{\mathrm{th}}^2 \,\bigl(k_{\max}\bigr)^2\;}
       {\,4\,\bigl(p_w - 0.5\bigr)^2\,}
  \;=\;
  \frac{\,z_{\mathrm{th}}^{2}\,n^{2}\,}{\,4\,(p_w - 0.5)^2\,}.
\]
This ensures that for any actual $k \le n$, each \emph{1}-bit will produce a Z-score above $z_{\mathrm{th}}$, and \emph{0}-bits will not exceed $z_{\mathrm{th}}$ with high probability.

\subsection*{Example}

\begin{itemize}
\item Let $n=3$, $\delta=0.1 \implies p_w = \frac{2(1+\delta)}{4+\delta} = \frac{2.2}{4.1}\approx 0.5366.$
\item Then $p_w - 0.5\approx 0.0366.$
\item For $95\%$ success $(\epsilon=0.05)$, $\implies \alpha_{\mathrm{bit}}=0.05/3\approx0.0167$, $\implies z_{\mathrm{th}}\approx 2.17$.
\item Worst-case $k_{\max}=3$:
\[
  T
  \;>\;
  \frac{\;2.17^2 \times 3^2\;}
       {\,4\,(0.0366)^2\,}
  \;\approx\; 7900,
\]
Thus, a covertext length of approximately 8000 tokens suffices to recover all bits with high probability.
\end{itemize}
\section{Proofs for Watermarking Scheme}
\label{watermark_theorems}

We first state two key theorems for the single--bit watermarking scheme (which can be found and are proven in \cite{red_green_watermark}). We then extend these theorems to the multi--bit case.
\subsection{Single--Bit Theorems}
\label{watermark_single_bit_theorems}
\begin{theorem}[Single--Bit Correctness]
\label{sing_bit_correctness}
For the watermarking scheme presented in \cite{red_green_watermark}, if a single bit $b\in\{0,1\}$ is encoded via $st \leftarrow \text{Enc}(k,b,h)$ using key $k$, then the decoding algorithm $\text{Dec}(k,h,st)$ recovers $b$ with probability at least
\[
1-\nu_0(\lambda),
\]
where $\nu_0(\lambda)$ is a negligible function in the security parameter $\lambda$. (See \cite{red_green_watermark} for the full proof.)
\end{theorem}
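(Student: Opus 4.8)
The plan is a two-sided concentration argument on the decoder's counter, essentially transporting the detection analysis of the red--green watermark of \cite{red_green_watermark} to our setting, where the ``green'' half of the vocabulary at step $j$ is pinned down pseudorandomly from the preceding $c$-gram and the salt $s$. Write $T$ for the covertext length and $ctr$ for the counter that $\Dec$ maintains for the single key $k$; the decoder outputs $1$ iff $ctr \ge z$. We argue that $ctr$ concentrates near $T/2$ when $b=0$ and near a strictly larger value when $b=1$, so that any threshold $z$ strictly between the two means separates the cases except with negligible probability.

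First I would pass to an idealized experiment in which $F_k$ is replaced by a uniformly random function $f$. Both $\Enc$ and $\Dec$ query $F_k$ only on inputs of the form $(s, T_{j-c},\ldots,T_{j-1})$, and the indicator of the event ``$\Dec$ outputs $b'\neq b$'' is computable in polynomial time given oracle access to this function; hence if that probability shifted non-negligibly under the swap we would contradict the pseudorandomness of $\PRF$ (Definition~\ref{Def:PRF}). It thus suffices to bound the decoding-error probability when the per-step masks $r^{(j)} = f(s, T_{j-c},\ldots,T_{j-1}) \in \{0,1\}^N$ come from a random function.

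Next I would condition on the event that the $c$-length windows appearing in the generated text are pairwise distinct; a union/counting bound---the same phenomenon quantified in the recurring-$c$-gram discussion of Section~\ref{sec:watermarkscheme-theory}---shows its complement has negligible probability for the parameters in use, and alternatively one may skip this conditioning at the cost of replacing the Hoeffding bound below by an Azuma bound on the martingale of centered indicators $\xi_j - \mathbb{E}[\xi_j \mid \text{past}]$. On the distinct-windows event the masks are i.i.d.\ uniform, so the indicators $\xi_j = \mathbbm{1}[\, r^{(j)}_{q_j} = 1 \,]$, with $q_j$ the vocabulary index of the $j$-th sampled token, are independent given the generation history. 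When $b=0$ the token $T_j$ is drawn from the unperturbed model distribution, which is independent of $r^{(j)}$, so $\mathbb{E}[\xi_j] = \tfrac12$ and $ctr \sim \mathrm{Binomial}(T,\tfrac12)$. When $b=1$ the token is drawn from $\Perturb(P_j, r^{(j)}, \delta)$; averaging over $r^{(j)}$ shows the $\xi_j$ stay independent but now have step-dependent means $p_{w,j} \ge \tfrac12$, with $p_{w,j} = p_w > \tfrac12$ whenever the eligible set of $\Perturb$ at step $j$ is nonempty, where $p_w - \tfrac12 = \Theta(\delta)$ exactly as recorded in Appendix~\ref{covertext_length_derivation}.

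Finally I would invoke Hoeffding in each case: $\Pr[\, ctr \ge z \mid b=0 \,] \le \exp\bigl(-2(z - T/2)^2/T\bigr)$, and, writing $\mu = \sum_j p_{w,j}$ for the watermarked mean, $\Pr[\, ctr < z \mid b=1 \,] \le \exp\bigl(-2(\mu - z)^2/T\bigr)$. Placing $z$ roughly midway between $T/2$ and $\mu$ and taking $T$ polynomial in $\lambda$ (concretely $T = \Theta(\lambda/\delta^2)$, as in the length derivation of Appendix~\ref{covertext_length_derivation}) drives both tails to $2^{-\Omega(\lambda)}$; adding back the negligible PRF term and the negligible collision term yields the claimed $1 - \nu_0(\lambda)$. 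The main obstacle is the low-entropy bookkeeping in the $b=1$ case: positions where $\Perturb$ has an empty eligible set contribute nothing to the gap $\mu - T/2$, so one must show that among the $T$ generated tokens sufficiently many positions carry entropy for $\mu$ to clear $z$ with a safe margin---this is precisely the quantitative core that the length-estimation procedure is designed around and that \cite{red_green_watermark} carries out in full.
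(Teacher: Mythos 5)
The paper does not prove this theorem itself: Theorem~\ref{sing_bit_correctness} is stated as an imported result from \cite{red_green_watermark}, and the paper explicitly defers to that reference for the full proof (``See \cite{red_green_watermark} for the full proof.''). So there is no in-paper proof to compare your attempt against.

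That said, your reconstruction captures the standard shape of the Kirchenbauer~et~al.\ analysis and is consistent with the detection-statistic framework the paper transplants into Appendix~\ref{covertext_length_derivation}: pass to a truly random $f$ by PRF security; reduce the detector's counter $ctr$ to a sum of (conditionally) independent indicators with mean $1/2$ when $b=0$ and mean $p_{w,j}\geq 1/2$ when $b=1$; separate the two by a threshold $z$ and bound both tails by Hoeffding/Azuma. Two honest caveats about your sketch. First, the ``distinct $c$-grams'' conditioning step is not negligible-in-$\lambda$ in general (it depends on $c$, $T$, and the model's entropy), so the Azuma route you mention in passing is the one that should actually carry the argument. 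Second, you correctly flag the real technical content---guaranteeing that enough steps have a nonempty eligible set in $\Perturb$ so that $\mu-T/2$ clears $z$ with a safe margin---but you leave it as a black box. That is exactly the spike-entropy lower bound that \cite{red_green_watermark} proves, and it is why the paper chose to cite the result rather than re-derive it; your proof is a proof sketch modulo that lemma, not a self-contained proof.
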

\begin{theorem}[Single--Bit Steganographic Security]
\label{sing_bit_security}
For the same scheme, no PPT adversary can distinguish watermarked text from natural text with non--negligible advantage. We define
\[
\text{Adv}^{\text{cha}}_{\Pi}(\lambda) = \Bigl|\Pr\Bigl[\mathcal{A}^{\text{Enc}(k,\cdot,\cdot)}(1^\lambda)=1\Bigr] - \Pr\Bigl[\mathcal{A}^{\text{OD}(\cdot,\cdot)}(1^\lambda)=1\Bigr]\Bigr|,
\]
then
\[
\text{Adv}^{\text{cha}}_{\Pi}(\lambda) \le \nu_1(\lambda),
\]
for some negligible function $\nu_1(\lambda)$. (Again, see \cite{red_green_watermark} for details.)
\end{theorem}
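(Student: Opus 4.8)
The plan is a standard hybrid argument that reduces steganographic security to the pseudorandomness of $\PRF$ (Definition~\ref{Def:PRF}) together with a statistical-closeness claim about a single watermarked token, plus a bad-event bound that the salted $c$-gram contexts do not collide.

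First I would dispose of the case $b=0$: since key $i$ is never used when the corresponding message bit is $0$, the encoder $\Enc(k,0,\vec{h})$ outputs a sample of $M^\ast(\prompt)$ with no perturbation applied, so its output distribution is exactly $\mathcal{D}(\vec{h})$ and the encoding oracle is information-theoretically identical to $\mathcal{O}_{\mathcal{D}}(\cdot,\vec{h})$ in this case. For $b=1$ I would proceed through two hybrids. Hybrid $\mathsf{H}_1$ replaces every invocation of $\PRF_k$ inside $\Enc$ by a single truly random function $f$; a straightforward distinguisher that simulates the whole chosen-hidden-text game, using its own oracle wherever $\Enc$ would call $\PRF_k$, shows that the adversary's advantage changes by at most a negligible $\nu_{\mathrm{prf}}(\lambda)$. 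Hybrid $\mathsf{H}_2$ is the channel oracle $\mathcal{O}_{\mathcal{D}}$ itself, and the heart of the proof is bounding the gap between $\mathsf{H}_1$ and $\mathsf{H}_2$.

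To bound that gap I would condition on the \emph{good event} that, across all token positions of every response produced for the adversary, the salted contexts $(s,T_{j-c},\dots,T_{j-1})$ handed to $f$ are pairwise distinct. On the good event the perturbation vectors $\vec{r}$ used at distinct positions are independent and uniform over $\{0,1\}^N$, which lets me argue position by position, via a chain (martingale) decomposition over the token sequence, that the law of each generated token, averaged over $\vec{r}$, is statistically close to the model distribution $M(\prompt,\vec{w})$; summing the per-token deviations bounds the total statistical distance between the $\mathsf{H}_1$ transcript and a genuine channel transcript. The complementary bad event — some hashed context repeating — is controlled by the salting mechanism together with an assumption on the per-block entropy of $M$; this is exactly the quantitative content of Appendix~\ref{covertext_length_derivation} and is established for this construction in~\cite{red_green_watermark}, and it contributes a further negligible term $\nu_{\mathrm{col}}(\lambda)$. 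Collecting the terms yields $\mathrm{Adv}^{\mathrm{cha}}_{\Pi}(\lambda)\le \nu_{\mathrm{prf}}(\lambda)+\nu_{\mathrm{col}}(\lambda)+(\text{summed per-token drift})\le \nu_1(\lambda)$ for a negligible $\nu_1$.

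I expect the main obstacle to be precisely the single-token closeness claim and the no-collision bound, since that is where all the scheme-specific work lives. One must verify that applying $\Perturb(\vec{p},\vec{r},\delta)$ with a uniformly random $\vec{r}$ induces a next-token distribution indistinguishable from $\vec{p}$ — this is where the cutoff set $\mathcal{I}$ and the renormalization in Algorithm~\ref{alg:perturb} matter — and that repeated contexts are rare, which is genuinely delicate because $c$-grams of natural-language text are far from uniform. Since the statement is quoted from~\cite{red_green_watermark}, the role of the argument here is to record this reduction structure and to confirm that the hypotheses of that work (the PRF assumption and the entropy/salting conditions) are satisfied by our instantiation.
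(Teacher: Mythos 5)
The paper does not actually prove this statement: it is imported verbatim as a black box, with the text ``(See \cite{red_green_watermark} for the full proof.)'' and the introductory remark that the two single-bit theorems ``can be found and are proven in \cite{red_green_watermark}.'' So there is no in-paper argument to compare against, and your proposal is an attempt to reconstruct a proof the authors deliberately elided. Your high-level skeleton --- (i) the $b=0$ branch is information-theoretically identical to the channel oracle; (ii) replace $\PRF_k$ by a truly random $f$ at cost $\nu_{\mathrm{prf}}$; (iii) condition on the good event of no $c$-gram collisions so the $\vec r$'s are fresh and independent; (iv) bound per-token drift and sum --- is exactly the shape a real proof would have, and your concluding paragraph is honest that the substantive load is (iii) and (iv).

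The genuine gap is that step (iv), which you flag as ``the main obstacle,'' is not merely unresolved but is in tension with the scheme as written, so you cannot safely defer it to the citation. Two concrete points. First, the pseudocode in Algorithm~\ref{alg:perturb} sets $\delta' = \delta w/(N-w)$ while only the indices in $\mathcal{I}$ are decremented; the total mass added is $\delta w$ and the total mass removed is $\delta'(|\mathcal{I}|-w) = \delta w(|\mathcal{I}|-w)/(N-w)$, which equals $\delta w$ only when $|\mathcal{I}|=N$. For $|\mathcal{I}|<N$ the output $\vec{p'}$ does not sum to $1$, so the ``perturbed distribution'' is not a distribution and any statistical-closeness claim is ill-posed until that is repaired (presumably $\delta' = \delta w/(|\mathcal{I}|-w)$ was intended). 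Second, even with that fix, averaging $\Perturb(\vec p,\vec r,\delta)$ over uniform $\vec r$ does not return $\vec p$: conditioned on $r_j=0$ the decrement $\delta'$ involves $E[w_{-j}/(|\mathcal{I}|-1-w_{-j})]$, which by convexity exceeds $1$, so the marginal of each token in $\mathcal{I}$ is shifted by $\Theta(\delta)$, not by a negligible amount. This is consistent with the fact that the underlying red/green scheme of \cite{red_green_watermark} is a \emph{biased} watermark rather than a distortion-free one; \cite{red_green_watermark} bounds perplexity degradation and proves detectability \emph{with} the key, but does not prove indistinguishability of the stegotext distribution from $\mathcal{D}(\vec h)$ to an adversary without the key. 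So the citation you are leaning on does not supply the missing lemma either. To close the argument one would need either to replace $\Perturb$ with a distortion-free sampler (e.g., a Gumbel/inverse-transform trick as in the Aaronson-style schemes the paper also cites), or to quantify the $\Theta(\delta)$ per-token drift and show the adversary's total variation advantage over a polynomial-length transcript remains negligible --- neither of which your sketch, nor the paper, nor \cite{red_green_watermark}, currently does.
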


\subsection{Multi-Bit Theorems}
\label{watermark_multi_bit_theorems}

In the multi--bit watermarking scheme, the message $m=(m_1,\dots, m_\ell)$ is encoded by partitioning the key as 
\[
k = k_1 \parallel k_2 \parallel \cdots \parallel k_\ell,
\]
and encoding each bit $m_i$ independently via a PRF over a recent window of tokens. We now present the proofs for correctness, steganographic security, and weak robustness.

\begin{theorem}[Correctness]
\label{watermark_multi_bit_correctness}
If an honest sender runs 
\[
st \leftarrow \text{Enc}(k, m, h)
\]
to embed the $\ell$--bit message $m=(m_1,\dots, m_\ell)$, and the receiver computes 
\[
m' \leftarrow \text{Dec}(k, h, st),
\]
then
\[
\Pr\bigl[m' \neq m\bigr] \le \ell\cdot \nu_0(\lambda),
\]
which is negligible provided $\ell$ is polynomial in $\lambda$.
\end{theorem}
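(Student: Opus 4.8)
The plan is to reduce multi-bit correctness to the single-bit correctness guarantee (Theorem~\ref{sing_bit_correctness}) via a union bound over the $\ell$ message positions. The decoding algorithm (Algorithm~\ref{alg:decode}) processes each bit index $i \in [\ell]$ independently: it maintains a counter $ctr_i$ that tallies how often the observed token $T_j$ lands in the PRF-boosted set for key $k_i$, and then thresholds against $z$. So the event $\{m' \neq m\}$ is the union over $i$ of the per-bit failure events $\{m'_i \neq m_i\}$, and it suffices to show each such event has probability at most $\nu_0(\lambda)$.

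First I would set up the per-bit analysis. Fix an index $i$. There are two cases. If $m_i = 0$, then by the encoding rule key $k_i$ is never selected, so no token distribution is ever perturbed using $F_{k_i}$; pseudorandomness of $F$ then implies that the indicator $[r_q = 1]$ (where $q$ is the index of the sampled token) behaves like an unbiased coin at each step, so $ctr_i$ concentrates around $T/2$ and falls below the threshold $z$ except with negligible probability. If $m_i = 1$, then at a $1/k$ fraction of token positions (in expectation) the distribution is perturbed toward the $k_i$-boosted tokens, inducing the positive drift $\Delta = (p_w - 0.5)/k$ analyzed in Appendix~\ref{covertext_length_derivation}; provided the covertext length $T$ exceeds the bound derived there, $ctr_i$ exceeds $z$ with overwhelming probability. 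Both cases are exactly the content of the single-bit correctness statement applied to the sub-scheme induced by $k_i$, so each contributes failure probability at most $\nu_0(\lambda)$.

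Then I would assemble the union bound: $\Pr[m' \neq m] = \Pr\bigl[\bigcup_{i=1}^{\ell} \{m'_i \neq m_i\}\bigr] \le \sum_{i=1}^{\ell} \Pr[m'_i \neq m_i] \le \ell \cdot \nu_0(\lambda)$, which is negligible whenever $\ell = \poly(\lambda)$. This gives the claimed bound.

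The main obstacle is that the per-bit decoding events are not literally independent copies of the single-bit scheme: all $\ell$ counters are computed from the \emph{same} generated covertext, and at each generation step a single key $k_j$ is dynamically chosen from $K_m = \{k_i : m_i = 1\}$, so the perturbations for different $1$-bits share token positions and are negatively correlated (a step spent boosting $k_i$ is a step not spent boosting $k_{i'}$). The union bound sidesteps the need for independence, but I still need to verify that (a) for each individual $m_i = 1$, the marginal drift in $ctr_i$ is $\Delta = (p_w-0.5)/k$ as claimed — this follows because each token step independently selects $k_i$ with probability $1/k$ via $F_k(j) \bmod |K_m|$, again invoking PRF security to replace this by a genuinely uniform choice — and (b) for $m_i = 0$, the counter $ctr_i$ sees no bias even though \emph{other} keys are being used, which holds because the boosted set for $k_i$ is (pseudo)independent of the boosted sets for the keys actually applied. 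Making points (a) and (b) rigorous requires one clean hybrid argument replacing all PRF invocations by truly random functions (incurring an additional negligible loss folded into $\nu_0$), after which the per-bit bounds are exactly the single-bit concentration bounds and the union bound closes the proof.
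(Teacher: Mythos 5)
Your proposal follows the same strategy as the paper's proof: reduce to the single-bit correctness guarantee of Theorem~\ref{sing_bit_correctness} and close with a union bound over the $\ell$ bit positions. The conclusion and the bound $\ell \cdot \nu_0(\lambda)$ match.

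Where you go beyond the paper is in the last paragraph, and it is worth keeping. The paper's proof asserts that ``the watermarked bits are embedded independently,'' which, as you correctly observe, is not literally true: all $\ell$ counters are computed from the same covertext, and the dynamic key-selection rule $k_j = K_m[F_k(j) \bmod |K_m|]$ makes the per-step perturbations for distinct $1$-bits mutually exclusive (hence negatively correlated). You rightly point out that the union bound needs no independence at all --- only a per-bit marginal failure bound --- and you sketch what is actually needed to establish that marginal bound: (a) for $m_i = 1$, the marginal drift is $(p_w - 0.5)/k$ because each step selects $k_i$ with probability $1/k$; (b) for $m_i = 0$, the counter sees no bias because $k_i$'s boosted set is pseudoindependent of the keys actually applied; and (c) a single hybrid replacing the PRF by a truly random function absorbs the pseudorandomness loss into $\nu_0$. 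This is a more careful reading of the same argument and repairs an imprecision in the paper's wording without changing the conclusion.
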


\begin{proof}
By Theorem 1, each bit $m_i$ is embedded with error probability at most $\nu_0(\lambda)$. 
Since the watermarked bits are embedded independently, the probability that any bit is decoded incorrectly is at most $\nu_0(\lambda)$. By a union bound over the $\ell$ bits, the total error probability is at most 
\[
\ell \cdot \nu_0(\lambda).
\]
Hence, if the underlying single--bit scheme is correct with negligible error, then the multi--bit scheme is also correct.
\end{proof}

\begin{theorem}[Steganographic Security]
\label{watermark_multi_bit_security}
For the multi--bit watermarking scheme, no PPT adversary can distinguish watermarked text from natural text with non--negligible advantage. Formally, if
\[
\text{Adv}^{\text{cha}}_{\Pi}(\lambda) = \Bigl|\Pr\Bigl[\mathcal{A}^{\text{Enc}(k,\cdot,\cdot)}(1^\lambda)=1\Bigr] - \Pr\Bigl[\mathcal{A}^{\text{OD}(\cdot,\cdot)}(1^\lambda)=1\Bigr]\Bigr|,
\]
then $\text{Adv}^{\text{cha}}_{\Pi}(\lambda)$ is negligible in $\lambda$.
\end{theorem}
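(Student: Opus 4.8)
The plan is to prove indistinguishability of the encoding oracle $\Enc(k,\cdot,\cdot)$ from the channel oracle $\mathcal{O}_{\mathcal{D}}(\cdot,\cdot)$ --- where for this construction the covertext channel $\mathcal{D}$ is the response distribution $M^\ast$ of the underlying language model --- via a sequence of hybrid experiments. The first $O(\ell)$ hybrids replace, one at a time, each pseudorandom-function key used by the scheme (the $\ell$ watermarking keys $k_1,\dots,k_\ell$, together with the key used to select a $1$-bit key at each token; or a single master key if the $k_i$ are derived as $k_i=F_k(i)$) by a truly random function of the appropriate signature. Each hop is justified by the pseudorandomness of $F$ (Definition~\ref{Def:PRF}) and a routine reduction: a distinguisher for the hop runs the encoding oracle internally, routing the calls to the currently-idealized key through its own real-or-random oracle; hence each hop changes the adversary's advantage by at most a negligible amount, and the cumulative change is at most $O(\ell)\cdot\nu_F(\lambda)$ for a negligible $\nu_F$, which is negligible since $\ell$ is polynomial in $\lambda$. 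Call the resulting hybrid $H$.

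It remains to argue that $H$ is statistically close to $\mathcal{O}_{\mathcal{D}}$, i.e. that watermarked generation with random functions produces responses within negligible statistical distance of a fresh draw from $M^\ast$. This rests on two facts. (i) \emph{Freshness of the perturbation strings.} The vector $\vec r$ passed to $\Perturb$ at each token step is, with overwhelming probability, a fresh uniformly random element of $\{0,1\}^N$, independent across steps and queries: distinct oracle calls either use distinct salts $s$, or, if the adversary repeats a history, still differ on the sampled tokens, so the triples (selected key, salt, $c$-gram) at distinct steps coincide only on a $c$-gram recurrence \emph{and} a key-selection collision --- an event we bound by a union bound over the polynomially many token positions, invoking the $c$-gram recurrence analysis behind the salting mechanism (Section~\ref{sec:watermarkscheme-theory}, Figure~\ref{fig:recurring_c_grams}). (ii) \emph{Distortion-freeness of a single $\Perturb$ step.} Averaging $\Perturb(\vec p,\vec r,\delta)$ over uniform $\vec r$ returns $\vec p$ up to a negligible additive error: the $+\delta$ boosts on the $r_j=1$ indices are balanced against the $-\delta'$ suppressions, the restriction to $\mathcal{I}=\{i:p_i\in[2\delta,1-2\delta]\}$ keeps the renormalization $\delta'=\delta w/(N-w)$ well defined, and $\mathbb{E}[\delta'\mid r_j=0]$ differs from $\delta$ by a quantity negligible in $|\mathcal{I}|$ (and the step is exact when $\mathcal{I}$ is empty, i.e. in low-entropy states) --- this is precisely the per-token content underlying the single-bit steganographic security guarantee (Theorem~\ref{sing_bit_security}), and, crucially, it holds \emph{regardless} of which keys populate $K_m$, so the encoded message is irrelevant to the output distribution. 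Chaining (ii) through a hybrid over the (polynomially many) token positions of each response and then over the adversary's (polynomially many) queries, under the good event of (i), yields the claim.

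Combining the two parts gives $\advantage{cha}{A,\Pi}(\lambda)\le O(\ell)\cdot\nu_F(\lambda)+\nu_{\mathrm{stat}}(\lambda)$, which is negligible. The step I expect to be the main obstacle is the within-response/cross-query collision analysis in (i): unlike the single-bit scheme, the multi-bit construction composes a per-token pseudorandom key \emph{selection} with the salted $c$-gram PRF, so I must rule out a detectable bias arising from $c$-grams that recur --- which happens with non-negligible probability when $c$ is small --- either by requiring $c$ large enough that recurrences are negligible over the generated length, or by showing that on a recurrence the induced correlation is absorbed into the statistical slack already incurred. A secondary, purely computational subtlety is that $\Perturb$ is only \emph{approximately} mean-preserving because $\delta'$ depends on the random $\vec r$; I would dispatch this by explicitly bounding $\bigl|\mathbb{E}[\delta'\mid r_j=0]-\delta\bigr|$ and propagating the per-token error over the $\poly(\lambda)$ tokens, which remains negligible for the admissible ranges of $\delta$ and vocabulary size $N$.
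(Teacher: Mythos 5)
Your proof takes a genuinely different route from the paper's, and it is worth comparing the two, but there is also a concrete gap in your statistical step that needs to be flagged.

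The paper's proof is a single, short $\ell$-step hybrid: Hybrid $\mathsf{H}_i$ replaces the watermark for the first $i$ keys by natural sampling, and each transition $\mathsf{H}_{i-1}\to\mathsf{H}_i$ is justified by treating the single-bit steganographic security of \cite{red_green_watermark} (Theorem~\ref{sing_bit_security}) entirely as a black box. Your proof instead opens the box: you first do $O(\ell)$ PRF-security hops to replace all keyed PRFs by random functions, and then argue statistical closeness of the resulting process to the channel by analyzing (i) collision-freeness of the per-step inputs $(\text{selected key},\, s,\, c\text{-gram})$ and (ii) approximate mean-preservation of $\Perturb$ under a uniform $\vec r$. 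This decomposition makes the actual cryptographic assumptions (pseudorandomness of $F$, freshness of $\vec r$) explicit rather than citing them, and you are right that the per-token key \emph{selection} interacting with the salted $c$-gram PRF creates a collision analysis that the paper's hybrid glosses over — the paper's ``$\mathsf{H}_i$ replaces the first $i$ bits by natural sampling'' is not even well-defined without specifying that the replacement is triggered per-token when the sampled key index is $\le i$. So your claim (i) correctly surfaces a real technical obligation that the paper's write-up skips.

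The gap is in claim (ii). You need $\mathbb{E}_{\vec r}[\Perturb(\vec p,\vec r,\delta)] = \vec p$ up to negligible additive error, since the marginal law of the sampled token is exactly this expectation, and any non-negligible per-coordinate bias compounds over the $\mathrm{poly}(\lambda)$ tokens into a distinguishable shift. But for the concrete $\Perturb$ in Algorithm~\ref{alg:perturb}, this does not hold. For $j \in \mathcal{I}$, $\mathbb{E}_{\vec r}[p'_j] = p_j + \tfrac{\delta}{2} - \tfrac12\mathbb{E}[\delta' \mid r_j=0]$, so mean-preservation requires $\mathbb{E}[\delta w/(N-w) \mid r_j = 0] = \delta$. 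With $w\mid r_j{=}0 \sim \mathrm{Bin}(|\mathcal{I}|-1, \tfrac12)$, this expectation is approximately $\delta(|\mathcal{I}|-1)/(2N - |\mathcal{I}| + 1)$, which equals $\delta$ only when $|\mathcal{I}| = N+1$ — impossible. When $|\mathcal{I}| \ll N$, the bias is $\Theta(\delta)$; when $|\mathcal{I}| = N$, it is $\Theta(\delta/N)$. In no case is it negligible in $\lambda$ for fixed $\delta$ and vocabulary size $N$ (both constants in the implementation, e.g. $\delta=0.1$). Indeed, the normalization $\delta' = \delta w/(N-w)$ does not even conserve total mass unless $|\mathcal{I}|=N$; probability conservation would require $\delta w/(|\mathcal{I}|-w)$, and even then Jensen's inequality leaves a nonzero bias. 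So your statistical phase cannot be closed as written: the scheme is distortionary (the Kirchenbauer–Geiping watermark is detectable by design), and the error you hoped to ``dispatch by bounding $|\mathbb{E}[\delta'\mid r_j{=}0]-\delta|$'' is not negligible. To make this route go through one would have to either take $\delta = \mathrm{negl}(\lambda)$ (killing the scheme's rate) or replace $\Perturb$ with a genuinely distortion-free sampler. The paper's black-box hybrid quietly inherits the same issue — it is buried inside its appeal to Theorem~\ref{sing_bit_security} — but because you unpack it, your proof exposes the dependence and cannot complete without addressing it.
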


\begin{proof}
We prove the theorem via a series of hybrid experiments. Let the multi--bit message be 
\[
m=(m_1,\dots,m_\ell)
\]
and let the shared key be 
\[
k=k_1\parallel k_2\parallel \cdots \parallel k_\ell.
\]
\noindent\textbf{Hybrid \(\mathsf{H}_0\):} This is the real steganographic experiment \(\mathsf{Exp}_0\) in which the encoder generates a watermarked text using \(\mathrm{Enc}(k,m,h)\) according to the multi--bit watermarking scheme. \\

\noindent\textbf{Hybrid \(\mathsf{H}_1\):} In this experiment, we modify the challenger so that the watermark embedding for the \emph{first} bit (associated with key \(k_1\) and message bit \(m_1\)) is replaced by sampling tokens from the natural distribution (i.e., without applying the watermark bias). The embedding procedures for \(m_2,\dots,m_\ell\) remain unchanged. By the single--bit steganographic security (Theorem 2), the distinguishing advantage between \(\mathsf{H}_0\) and \(\mathsf{H}_1\) is at most \(\nu_1(\lambda)\). \\

\noindent\textbf{Hybrid \(\mathsf{H}_2\):} In \(\mathsf{H}_2\), we modify the challenger further so that the watermark embedding for the \emph{first two} bits (associated with keys \(k_1\) and \(k_2\)) is replaced by natural token sampling, while the embedding for \(m_3,\dots,m_\ell\) remains unchanged. Again, by the security of the single--bit scheme, the advantage between \(\mathsf{H}_1\) and \(\mathsf{H}_2\) is at most \(\nu_1(\lambda)\). \\

\noindent\textbf{Hybrid \(\mathsf{H}_i\) for \(1\le i\le \ell\):} More generally, define \(\mathsf{H}_i\) to be the experiment in which the watermark embedding for the first \(i\) bits is replaced by natural token sampling, while the remaining \(\ell-i\) bits are embedded normally. By the single--bit security guarantee, for every \(i\) the distinguishing advantage between \(\mathsf{H}_{i-1}\) and \(\mathsf{H}_i\) is at most \(\nu_1(\lambda)\). \\

\noindent\textbf{Hybrid \(\mathsf{H}_\ell\):} In the final hybrid, all watermarking instances are replaced by natural token sampling. Hence, \(\mathsf{H}_\ell\) is identical to the natural text generation process, with no watermark present. \\

By the triangle inequality, the overall distinguishing advantage between \(\mathsf{H}_0\) (the real watermarked experiment) and \(\mathsf{H}_\ell\) (the natural text generation) is at most
\[
\sum_{i=1}^{\ell} \nu_1(\lambda)=\ell\cdot \nu_1(\lambda).
\]

Since \(\ell\) is polynomial in \(\lambda\) and \(\nu_1(\lambda)\) is negligible, the overall advantage is negligible. \\

Thus, no PPT adversary can distinguish watermarked text from natural text with more than negligible advantage, proving the steganographic security of the multi--bit scheme.
\end{proof}

\begin{theorem}[Weak Robustness]
\label{watermark_multi_bit_weak_robustness}
Under the weak threat model (see Section~\ref{weak_threat_model}), suppose an adversary applies a tampering function
\[
f \in \mathcal{F}_{k,\epsilon}.
\]
Then, for an honestly generated stegotext
\[
st \leftarrow \text{Enc}(k, m, h),
\]
we have
\[
\Pr\Bigl[\text{Dec}(k, h, f(st)) \neq m\Bigr] \le \ell\cdot \nu_0(\lambda) + \ell\cdot \eta(\epsilon),
\]
where $\eta(\epsilon)$ is a function that vanishes as $\epsilon \to 1$. In other words, $ \nu_0(\lambda)$ captures the inherent error in recovering watermarked text and $\eta(\epsilon)$ captures the additional error introduced by adversarial modifications that disrupt the watermark’s structure. As the tampering becomes increasingly mild (i.e., as $\epsilon$ approaches 1 and nearly all $k$-length substrings are preserved), this additional error becomes negligible. In particular, if the tampering is sufficiently mild, the message $m$ is recovered with overwhelming probability.
\end{theorem}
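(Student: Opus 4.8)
The plan is to follow the same per-bit decomposition as in the proof of Theorem~\ref{watermark_multi_bit_correctness}: I would bound the probability that $\Dec(k,h,f(st))$ errs on a single bit $m_i$ by $\nu_0(\lambda)+\eta(\epsilon)$, and then conclude by a union bound over the $\ell$ bits. Throughout, write $k$ for the substring length of the tampering family $\mathcal{F}_{k,\epsilon}$, $c$ for the PRF context-window length, and $w=\sum_i m_i$; the argument needs $k\ge c+1$, since otherwise a surviving $k$-window need not contain a single intact generation context and no nontrivial $\eta$ can exist. Fix $i$ and consider the counter $ctr_i$ that $\Dec$ accumulates while scanning $f(st)$.

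The combinatorial core is that locally consistent tampering preserves most of the evidence $\Dec$ relies on. Since $f\in\mathcal{F}_{k,\epsilon}$, the string $f(st)$ contains at least an $\epsilon$-fraction of the length-$k$ windows of $st$ as contiguous substrings; each such window carries $k-c$ intact $(c+1)$-grams, so (up to boundary terms) a fraction $\rho(\epsilon)$ of the original token positions $j$ — with $\rho(\epsilon)\to 1$ as $\epsilon\to 1$ for fixed $k,c$ — have their full context $(T_{j-c},\dots,T_{j-1})$ and token $T_j$ reproduced verbatim inside $f(st)$. When $\Dec$ reaches such a position it recomputes the identical value $\vec r=F_{k_i}(s,T_{j-c},\dots,T_{j-1})$ and updates $ctr_i$ exactly as on the honest stegotext. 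Every other position of $f(st)$ — inserted tokens, ``seam'' positions joining two surviving fragments, and positions whose context was altered — also feeds $ctr_i$; since the adversary never sees $k$, I would replace each $F_{k_i}$ by a random function (losing only a negligible $\ell\cdot\advantage{prf}{D,\PRF}$ term), after which the indicator ``$r_q=1$'' at any such position is a fair coin, independent across distinct contexts (repeated contexts are rare by the salting mechanism and the choice of $c$, cf. Figure~\ref{fig:recurring_c_grams}, and cost only a union-bound term folded into $\eta$). Thus if $m_i=0$ then $k_i$ is never used and every position is a $1/2$-coin, so a Hoeffding bound gives $\Pr[ctr_i\ge z]\le\nu_0(\lambda)$ (exactly the false-positive analysis of Appendix~\ref{covertext_length_derivation}, with no $\epsilon$-dependence); and if $m_i=1$ then among the $\ge\rho(\epsilon)T$ surviving genuine positions roughly a $1/w$ fraction used $k_i$ — giving $\Pr[r_q=1]=p_w>\tfrac12$ there — and the rest are $1/2$-coins, so
\[
\mathbb{E}[ctr_i]\;\ge\;\tfrac12\,|f(st)|\;+\;\rho(\epsilon)\cdot\tfrac{T}{w}\,(p_w-\tfrac12)\;-\;o(T).
\]
The covertext-length rule of Appendix~\ref{covertext_length_derivation} with its safety factor makes $\tfrac{T}{w}(p_w-\tfrac12)$ exceed the threshold gap $z-\tfrac12 T$ by a constant factor, so for $\epsilon$ close enough to $1$ the attenuated signal $\rho(\epsilon)\tfrac{T}{w}(p_w-\tfrac12)$ still clears the threshold and a second Hoeffding bound gives $\Pr[ctr_i<z]\le\nu_0(\lambda)+\eta(\epsilon)$ with $\eta(\epsilon)\to0$ as $\rho(\epsilon)\to1$. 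Summing over the $\ell$ bits yields the claimed bound.

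The hard part will be the combinatorial step: ``an $\epsilon$-fraction of $k$-windows survives as a substring of $f(st)$'' does not by itself say which $(c+1)$-grams survive \emph{in place}, and it admits pathological $f\in\mathcal{F}_{k,\epsilon}$ — e.g. padding $st$ with long stretches of fresh tokens, or duplicating a surviving fragment — that inflate $|f(st)|$ and hence $ctr_i$ against the fixed threshold $z$. To get a clean, $f$-independent $\eta(\epsilon)$ I would either add a length constraint to $\mathcal{F}_{k,\epsilon}$ or have $\Dec$ normalize $ctr_i$ by $|f(st)|$, turning the test into the z-test of Appendix~\ref{covertext_length_derivation}; under either convention the remaining work — the PRF-to-random-function switch and the two Chernoff/Hoeffding estimates — is routine.
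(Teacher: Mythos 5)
Your decomposition---bound each bit's error by $\nu_0(\lambda)+\eta(\epsilon)$, then union-bound over the $\ell$ bits---is exactly the paper's skeleton. Where you diverge is the per-bit step: the paper's proof of Theorem~\ref{watermark_multi_bit_weak_robustness} simply \emph{cites} the single-bit $(k,\epsilon)$-robustness of~\cite{red_green_watermark} as a black box and never opens it, whereas you reprove it from first principles (counting surviving $(c+1)$-gram contexts, a PRF-to-random-function hybrid costing $\ell\cdot\advantage{prf}{D,\PRF}$, Hoeffding on $ctr_i$ in both the $m_i=0$ and $m_i=1$ cases, and explicit tracking of the $1/w$ key-selection attenuation). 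Your route is more informative, and it is also more honest: the scheme here differs from the one in~\cite{red_green_watermark} in that a ``present'' key $k_i$ is only sampled a $1/w$ fraction of the time, so the cited single-bit robustness bound does not transfer verbatim, and the paper's proof quietly elides this. You correctly surface this by carrying $\rho(\epsilon)\cdot\tfrac{T}{w}(p_w-\tfrac12)$ through the Hoeffding estimate.

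You also flag a genuine gap that the paper's proof does not address at all: $\mathcal{F}_{k,\epsilon}$ as defined in Section~\ref{weak_threat_model} places no upper bound on $|f(x)|$, so an adversary can pad $f(st)$ with fresh tokens and drive the unnormalized counter $ctr_i$ past the fixed threshold $z$ for \emph{every} $i$, forcing $m_i'=1$ regardless of $m_i$ (and still satisfying local consistency, since padding never removes surviving $k$-windows). The paper's appeal to ``the z-score remains above threshold'' silently presupposes the normalized statistic of Appendix~\ref{covertext_length_derivation} rather than the raw count in Algorithm~\ref{alg:decode}; your proposed fix---normalize $ctr_i$ by $|f(st)|$, or add a length constraint to $\mathcal{F}_{k,\epsilon}$---is the right repair and should really be folded into the definition or the decoder. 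With that repair in place, your remaining combinatorial concern (surviving $k$-windows need not survive \emph{in place}, and stitched fragments create seam positions whose contexts are garbage) is handled correctly in your sketch by treating seam and inserted positions as independent fair coins after the PRF switch; just note that independence across repeated contexts is only approximate, and the salting argument you invoke must be made quantitative (a union bound over collisions among the $O(T)$ contexts) to fold cleanly into $\eta(\epsilon)$.
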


\begin{proof}
In \cite{red_green_watermark}, it is shown that when a single watermark bit is embedded via the watermarking scheme, if an adversary applies a tampering function that is \((k,\epsilon)\)–locally consistent (i.e., preserving at least an \(\epsilon\) fraction of the \(k\)–length substrings), then the detection statistic (such as the z-score) remains above the detection threshold with high probability. In particular, there exists a negligible function \(\nu_0(\lambda)\) (arising from the inherent randomness of the watermarking process) and an additional term \(\eta_i(\epsilon)\) (which quantifies the extra error due solely to the adversary's modifications) such that the probability that the single watermark bit \(m_i\) is misdecoded is at most
\[
\Pr[\text{Dec}_i(k_i,h,f(st)) \neq m_i] \le \nu_0(\lambda) + \eta_i(\epsilon).
\]
Here, \(\eta_i(\epsilon)\) depends on the specific nature of the tampering; importantly, \(\eta_i(\epsilon) \to 0\) as \(\epsilon \to 1\) because when nearly all \(k\)-substrings are preserved, the adversarial modifications have a negligible effect on the watermark’s structure.
Assuming that the same robustness guarantees hold uniformly for each watermark bit (i.e., \(\eta_i(\epsilon) \le \eta(\epsilon)\) for all \(i\)), we can apply a union bound over the \(\ell\) bits of the hidden message \(m=(m_1,\dots,m_\ell)\). Thus, the overall probability that at least one bit is misdecoded is bounded by
\[
\Pr\Bigl[\text{Dec}(k, h, f(st)) \neq m\Bigr] \le \ell\cdot \nu_0(\lambda) + \ell\cdot \eta(\epsilon).
\]
This shows that under tampering by any \(f \in \mathcal{F}_{k,\epsilon}\), as long as \(\epsilon\) is close to 1 (i.e., the adversary’s modifications are sufficiently mild), the extra error \(\eta(\epsilon)\) becomes negligible. Hence, the overall scheme remains robust---the hidden message is recovered with overwhelming probability.
\end{proof}

\section{Proofs for Embedding Scheme}
\label{embedding_theorems}

\begin{theorem}[Correctness]
\label{embedding_correctness_proof}
Let $m\in\{0,1\}^n$ be the hidden message partitioned into $r=\lceil n/l\rceil$ chunks, $E$ be an embedding function, and $H$ be a locally sensitive hash function that outputs $l$ bits where \(H(E(x)) = H(E(x'))\) if \(\text{distance}(E(x), E(x')) \leq \delta\) for $x, x' \in \{0,1\}^n$ and \(\Pr \left[H(y) = 1\right] = 1/2\) for a randomly sampled $y\in\{0,1\}^n$. Additionally, assume the language model being used produces enough entropy in each sample that the encoder can always sample a valid token. If for each chunk $m_i$, the encoder outputs a candidate text $x$ such that 
\[
H(E(x)) = m_i,
\]
then the decoding algorithm recovers the entire message $m$ with overwhelming probability (i.e., the error probability is negligible in the security parameter $\lambda$).
\end{theorem}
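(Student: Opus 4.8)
The plan is to prove correctness in two stages: first show that the encoder's rejection-sampling loop terminates successfully for every chunk except with negligible probability (this is where the dependence on the security parameter $\lambda$ enters), and then show that, conditioned on successful encoding, the decoder reconstructs $m$ exactly by a purely deterministic computation.

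For the first stage, fix a chunk index $i$ with target $m_i\in\{0,1\}^l$. Each candidate text $x$ drawn from the language model yields an embedding $E(x)$, and I would argue that a fresh candidate satisfies $H(E(x))=m_i$ with probability at least roughly $2^{-l}$: this uses the balancedness hypothesis $\Pr[H(y)=1]=1/2$ together with (approximate) independence of the $l$ output bits of the LSH, and crucially the hypothesis that the language model retains enough entropy that successive samples are genuinely spread out in embedding space rather than collapsing onto a single hash value. Given such a per-trial success probability, after $t=\lambda\cdot 2^l$ independent trials the loop fails to accept with probability at most $(1-2^{-l})^{t}\le e^{-\lambda}$. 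A union bound over the $r=\lceil n/l\rceil\le n$ chunks then shows the encoder produces a complete stegotext $st=x_1\parallel x_2\parallel\cdots\parallel x_r$ with $H(E(x_i))=m_i$ for all $i$, except with probability at most $r\cdot e^{-\lambda}$, which is negligible since $r$ is polynomial in the input length.

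For the second stage, condition on this event. The decoder re-partitions $st$ into the same segments $x_1,\dots,x_r$ — well-defined because chunk boundaries are fixed by a per-chunk token budget (or an explicit delimiter) known to both parties — recomputes $E(x_i)$ and then $H(E(x_i))$ for each $i$, and concatenates. Because $E$ and $H$ are deterministic, $H(E(x_i))$ is exactly the value $m_i$ that the encoder verified during rejection sampling, so $m_1\parallel\cdots\parallel m_r$ equals $m$ (truncating the final block to $n-(r-1)l$ bits when $l\nmid n$, using the known length $n$). Combining the two stages gives $\Pr[m'\neq m]\le r\cdot e^{-\lambda}$, which is negligible in $\lambda$.

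I expect the main obstacle to be the justification of the per-trial acceptance probability $\approx 2^{-l}$. The hypothesis $\Pr[H(y)=1]=1/2$ is stated for a uniformly random input $y$, whereas the embeddings of LLM-generated candidates follow a highly non-uniform distribution, so the bound must genuinely rely on the ``enough entropy'' assumption to exclude the degenerate regime — observed in the experiments and there repaired with error-correcting codes — in which every plausible continuation hashes to the same $l$-bit string. Making this dependence precise, or folding it cleanly into the hypothesis so that the $2^{-l}$ estimate is immediate, is the delicate step; the union bound, the determinism of the decoder, and the treatment of the final partial chunk are all routine.
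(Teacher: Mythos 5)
Your proposal takes essentially the same route as the paper's proof: balancedness of $H$ gives a per-trial acceptance probability of roughly $2^{-l}$, the high-entropy assumption is invoked to ensure rejection sampling terminates in polynomially many trials, and a union bound over the $r$ chunks finishes the argument; you simply spell out the $t=\lambda\cdot 2^l$ trial bound, the deterministic decoding stage, and the chunk-boundary and partial-final-block bookkeeping that the paper leaves implicit. The gap you flag — that $\Pr[H(y)=1]=1/2$ is hypothesized only for uniform $y$, while the encoder samples from the (non-uniform) LLM output distribution pushed through $E$, so the $2^{-l}$ estimate really has to be smuggled in via the ``enough entropy'' assumption — is a genuine one and is equally unaddressed in the paper, which simply asserts $\Pr[H(E(x))=m_i]=2^{-l}$ without derivation.
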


\begin{proof}
For a fixed chunk $m_i$, since the LSH function is balanced, a randomly generated candidate text $x$ from the language model satisfies
\[
\Pr\bigl[H(E(x)) = m_i\bigr] = \frac{1}{2^l}.
\]
Thus, the expected number of samples needed is $2^l$. Given the high--entropy assumption of the language model, the probability that the encoder fails to find a valid candidate within a polynomial number of samples is negligible. By applying a union bound over all $r$ chunks, the overall failure probability remains negligible when $r$ is polynomial in $\lambda$. Therefore, an honestly encoded message is decoded correctly with overwhelming probability.
\end{proof}

\begin{theorem}[Undetectability]
Assume that the LSH function $H$ evenly splits the message space (i.e., each $l$--bit outcome occurs with probability $1/2^l$) and that the PRF used in the perturbation is keyed so that its output is unpredictable. Then, provided that the language model produces high--entropy candidate texts, the rejection sampling procedure yields stegotexts that are statistically indistinguishable from naturally generated texts.\footnote{We ignore low-entropy cases in this proof. In practice, this is handled by sampling incorrect candidates and relying on error correction during recovery.}
\end{theorem}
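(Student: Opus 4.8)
The plan is to show that the embedding scheme meets Definition~\ref{def:CHASecurity}: no PPT adversary can distinguish the $\Enc$ oracle from the channel oracle $\mathcal{O}_{\mathcal{D}}$. The argument rests on two facts. First, a probabilistic identity: if, for the LLM's response distribution $D$ on any reachable history, the composed map $x \mapsto H(E(x))$ is uniform on $\{0,1\}^l$, then drawing a uniform target $v$ and then rejection-sampling $x \sim D$ until $H(E(x)) = v$ reproduces $D$ exactly. Indeed, writing $D_v$ for $D$ conditioned on $H(E(x)) = v$, the ``evenly splits'' hypothesis gives $D = 2^{-l}\sum_{v\in\{0,1\}^l} D_v$, so the uniform mixture of the $D_v$'s is $D$. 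Second, the ``high-entropy'' hypothesis ensures every $D_v$ is well-defined and that the rejection loop terminates within polynomially many iterations except with negligible probability, so $\Enc$ is PPT and each chunk's output distribution is (statistically) $D_v$; the footnoted low-entropy case is assumed away.

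Given these, I would run a short hybrid. Let $\mathsf{H}_0$ be the real CHA game with the $\Enc(k,\cdot,\cdot)$ oracle: on a query $(m,\vec h)$ the scheme forms the payload $\tilde m$ (the encryption of $m$, or equivalently the bits read off the keyed LSH), chunks it into $\tilde m_1,\dots,\tilde m_r$, and generates chunk $i$ by rejection-sampling the LLM on the prompt extended by $x_1\Vert\cdots\Vert x_{i-1}$ until $H(E(x_i)) = \tilde m_i$. In $\mathsf{H}_1$, replace every payload $\tilde m$ by a fresh uniformly random string of the same length. Since the adversary never holds the key used to derive $\tilde m$, $\mathsf{H}_0$ and $\mathsf{H}_1$ are indistinguishable --- information-theoretically if one models the payload as a one-time-pad-style encryption (the cleanest reading of ``statistically indistinguishable''), and otherwise by the pseudorandomness of the ciphertexts or keyed-hash outputs, losing a negligible amount per query. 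In $\mathsf{H}_1$ the targets $\tilde m_1,\dots,\tilde m_r$ are uniform and independent, so I would apply the identity chunk by chunk, by induction on $r$: revealing $x_i$ pins down only $\tilde m_i = H(E(x_i))$ and leaves $\tilde m_{i+1},\dots,\tilde m_r$ uniform and independent, and the ``evenly splits'' property holds again for the extended history; hence the full stegotext is distributed exactly as a natural LLM response on $\vec h$, i.e. exactly as $\mathcal{O}_{\mathcal{D}}$ would answer. Therefore $\mathsf{H}_1$ is (up to the negligible slack from the termination bound) the ideal game, and the CHA advantage is negligible.

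The step that actually requires care --- and the one a naive ``rejection sampling is transparent'' argument gets wrong --- is the move from arbitrary to uniform targets in $\mathsf{H}_0 \to \mathsf{H}_1$. With an adversarially chosen message such as $0^n$ and an unkeyed LSH, the stegotext would follow $D_{0^l}$, and the ``evenly splits'' hypothesis gives no control whatsoever over how far $D_{0^l}$ is from $D$; undetectability would simply be false. So the proof must lean on the scheme's stipulation that the hash is keyed (or the payload encrypted), and must make the ``fresh input'' condition precise enough that the keyed-LSH values across the possibly many, possibly repeated chunks of a long conversation can be treated as a uniform, independent sequence. Once that is pinned down, the probabilistic identity and the induction over chunks are routine, mirroring the undetectability proofs of earlier provably secure stegosystems~\cite{cachin1998information,hopper2002provably,meteor} specialized to hashing over embeddings.
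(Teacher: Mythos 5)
Your proposal is correct and is in fact \emph{more} careful than the paper's own proof, which at the crucial step merely asserts that ``conditioning on this event does not significantly alter the underlying distribution of $x$'' and then appeals informally to the unpredictability of the keyed PRF. As you point out, that assertion is false for a fixed target: the ``evenly splits'' hypothesis only controls the probability mass of each preimage class $D_v$, not its distance from $D$, and $D_{0^l}$ could be arbitrarily far from $D$. Your fix --- first uniformize the targets via a hybrid ($\mathsf{H}_0 \to \mathsf{H}_1$) justified by the keyed LSH or payload encryption, then apply the exact mixture identity $D = 2^{-l}\sum_{v\in\{0,1\}^l} D_v$ so that uniform-target rejection sampling reproduces $D$ on the nose, then induct over chunks using the fact that revealing $x_i$ pins down only $\tilde m_i$ --- is the argument the paper's prose is gesturing at but never actually makes. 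In effect, the paper conflates ``the keyed PRF is unpredictable'' with ``any fixed-target conditioning is harmless,'' whereas your hybrid makes explicit that unpredictability is used precisely to reduce to the uniform-target case where the identity applies exactly. What your version buys is a clean separation of the statistical step (the mixture identity, exact) from the computational/statistical step (uniformizing the payload), with the polynomial-termination issue isolated to the high-entropy assumption; what the paper's terser version buys is brevity at the cost of a step that, read literally, does not hold.

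One small thing to tighten if you wanted to turn this into a fully formal proof: in $\mathsf{H}_1$, you need each chunk's target $\tilde m_i$ to be uniform \emph{conditioned on the history available to the rejection sampler at the time chunk $i$ is generated} (not just marginally uniform), and the ``evenly splits'' property to hold for the response distribution on every extended prompt $\vec h \,\|\, x_1\|\cdots\|x_{i-1}$ reached along the way. You acknowledge this (``the `evenly splits' property holds again for the extended history''), but it is worth noting that this is an assumption on $H\circ E$ composed with the LLM at \emph{all} reachable histories, which is strictly stronger than the single-distribution balance the theorem statement names; the paper's own proof has the same implicit dependence.
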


\begin{proof}
The encoder repeatedly samples candidate texts $x$ until $H(E(x)) = m_i$ for the desired chunk $m_i$. Since $H$ is balanced, $H(E(x)) = m_i$ happens with probability $1/2^l$, and conditioning on this event does not significantly alter the underlying distribution of $x$. Additionally, the perturbations induced by the keyed PRF (which determines the split) are unpredictable to PPT adversaries. Therefore, the distribution of sampled texts remains statistically close to the natural distribution of texts generated by the language model. Therefore, no efficient adversary can distinguish between stegotexts and naturally generated texts with non--negligible advantage.
\end{proof}

\begin{theorem}[Robustness]
\label{thm:robustness}
Let \(m = (m_{1}, \dots, m_{r})\) be a hidden message, and let \(x = (x_{1}, \dots, x_{r})\) be the covertext in which the \(r\) chunks of \(m\) are respectively embedded. Suppose an adversary obtains \(x\) and produces
\[
x' = (x'_1, \dots, x'_r)
\]
by paraphrasing some subset of the covertext chunks, subject to the condition
\[
\mathrm{distance}(E(x_j), E(x'_j)) \le \delta
\quad
\text{for each paraphrased chunk } x'_j
\]
where $j \in \{1 \dots n\}$.
Then, by the design of the LSH function \(H\), the decoding algorithm recovers all \(r\) chunks of \(m\) with overwhelming probability.
\end{theorem}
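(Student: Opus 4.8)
The plan is to reduce the claim to a per-chunk statement and then combine the chunks by a union bound. Recall that on input a (possibly tampered) stegotext, the decoder re-segments it into the $r$ chunks $x'_1, \dots, x'_r$, computes $H(E(x'_j))$ for each $j \in [r]$, and outputs the concatenation $H(E(x'_1)) \parallel \cdots \parallel H(E(x'_r))$. Hence it suffices to show that $H(E(x'_j)) = m_j$ for every $j$, where $m_j \in \{0,1\}^l$ is the $j$-th $l$-bit chunk of $m$.

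First I would dispose of the chunks the adversary leaves untouched: for such $j$ we have $x'_j = x_j$, and the encoding algorithm accepted $x_j$ only once the condition $H(E(x_j)) = m_j$ held (that it terminates and finds such an $x_j$ with overwhelming probability is exactly the content of the Correctness theorem, Theorem~\ref{embedding_correctness_proof}, using the high-entropy assumption on the language model). So $H(E(x'_j)) = H(E(x_j)) = m_j$ with no additional error for these chunks. Next, for each paraphrased chunk $j$, I would invoke the locality-sensitivity of $H$: by hypothesis $\mathrm{distance}(E(x_j), E(x'_j)) \le \delta$, and $\delta$ is chosen precisely so that this pair falls in the collision regime of $H$. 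In the idealized formulation used in Theorem~\ref{embedding_correctness_proof}, $\mathrm{distance}(\cdot,\cdot) \le \delta$ forces $H(E(x'_j)) = H(E(x_j)) = m_j$ outright; in the realistic $(r_1,r_2,p_1,p_2)$-sensitive formulation it gives $H(E(x'_j)) = H(E(x_j))$ except with some probability $\beta = \beta(\delta)$, which can be amplified down by hash repetition or absorbed by the error-correcting layer of Section~\ref{sec:embedscheme-theory}. A union bound over the at most $r$ paraphrased chunks then bounds the overall failure probability by $r\beta$, which is $0$ in the idealized case and negligible whenever $\beta$ is negligible and $r = \mathrm{poly}(\lambda)$. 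Combining the two cases, $H(E(x'_j)) = m_j$ for all $j$, so the decoder outputs $m$ with overwhelming probability.

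The hard part is not the per-chunk collision argument — that follows almost immediately from the definition of LSH and the choice of $\delta$ — but justifying that the decoder reproduces the \emph{same} chunk boundaries after tampering, i.e. that $x'$ parses into $r$ pieces aligned with $x_1, \dots, x_r$. The theorem builds this in by restricting the adversary to chunk-wise paraphrasing; a fully adversarial paraphraser could instead merge, split, or resize chunks, and handling that would require delimiter tokens, embedded length metadata, or a self-synchronizing code. I would flag this as the step where the idealization does real work. A secondary point worth stating explicitly is that ``by the design of $H$'' silently assumes $\delta$ has been calibrated to the embedding model's actual paraphrase-drift distribution; the empirical drift measurements referenced in Section~\ref{sec:embedding_measurements} are what make this assumption defensible in practice.
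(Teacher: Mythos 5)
Your proof follows the same route as the paper's: invoke the LSH collision property for each paraphrased chunk (distance $\le \delta$ implies $H(E(x'_j)) = H(E(x_j))$ with overwhelming probability) and then take a union bound over the $r$ chunks to bound the total failure probability by $r \cdot \nu(\lambda)$, negligible for polynomial $r$. Your closing observation---that the argument silently assumes the decoder recovers the original chunk boundaries from the tampered text, and that the theorem's restriction to chunk-wise paraphrasing is doing real work there---is correct and is a gap the paper's own proof also passes over without comment.
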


\begin{proof}
By strong robustness (Definition~\ref{def:StrongRobustness}), whenever a chunk \(x_j\) is paraphrased into \(x'_j\) with
\[
\mathrm{distance}(E(x_j), E(x'_j)) \le \delta,
\]
the LSH function \(H\) is designed so that
\[
H(E(x'_j)) = H(E(x_j))
\]
with probability at least \(1 - \nu(\lambda)\) for some negligible function \(\nu\). Consequently, the decoded chunk \(m_j\) remains correct with probability at least \(1 - \nu(\lambda)\) for each $j \in \{1 \dots n\}$.

\medskip

\noindent
Define the event \(F_j\) to be “chunk \(m_j\) fails to decode after paraphrasing.” Then
\[
\Pr[F_j] \le \nu(\lambda).
\]
Let \(F = (F_1 \cup F_2 \cup \dots \cup F_r)\) be the event that at least one chunk fails. By the union bound,
\[
\Pr[F]
= \Pr(F_1 \cup \cdots \cup F_r)
\le \sum_{j=1}^r \Pr[F_j]
\le r \cdot \nu(\lambda).
\]
Hence the probability of successfully decoding all \(r\) chunks simultaneously is
\[
\Pr(\text{all chunks decode correctly})
\ge 1 - r \cdot \nu(\lambda).
\]
Since \(r\) is polynomial in the security parameter \(\lambda\), and \(\nu(\lambda)\) is negligible, the product \(r \cdot \nu(\lambda)\) is also negligible. Therefore, with overwhelming probability, all chunks of \(m\) are recovered correctly from the paraphrased covertext \(x'\). This establishes the scheme’s robustness against paraphrasing of any subset of chunks.
\end{proof}

\section{PCA Prompt}
\label{pca_prompt}

We trained a variant of our PCA LSH over a cleaned variant of the Enron corporate email dataset called the Enronsent corpus. \cite{styler2011enronsent}. The following prompt is designed to be used with this model and can be found at \url{https://github.com/NeilAPerry/robust_steganography}. \\

\noindent `````` \\
\noindent I wanted to follow up regarding the implementation timeline for the new risk management system. Based on our initial assessment, we'll need to coordinate closely with both IT and Operations to ensure a smooth transition. Please review the attached documentation when you have a moment. \\
    
\noindent After consulting with the development team, we've identified several key milestones that need to be addressed before proceeding. The current testing phase has revealed some potential integration issues with our legacy systems, particularly in the trade validation module. We're working on implementing the necessary fixes and expect to have an updated timeline by end of week. \\
    
\noindent Given the complexity of these changes, I believe it would be beneficial to schedule a stakeholder review meeting. We should include representatives from Risk Management, IT Operations, and the Trading desk to ensure all requirements are being met. I've asked Sarah to coordinate calendars for next Tuesday afternoon. \\
\noindent '''''' \\

\end{document}